\documentclass[aps,preprintnumbers,showpacs,superscriptaddress,pra]{revtex4-2}


\usepackage[margin=0.8in]{geometry}
\setlength{\parskip}{6pt}
\setlength{\parindent}{0 pt}


\usepackage[utf8]{inputenc}
\usepackage[english]{babel}
\usepackage[T1]{fontenc}
\usepackage{amsmath,amssymb,amsthm}
\usepackage{mathtools}

\usepackage{graphicx}
\usepackage{url}

\usepackage{mathrsfs}
\usepackage{dsfont}
\usepackage{setspace}
\usepackage[export]{adjustbox}
\usepackage{makecell}
\usepackage{enumitem}



\usepackage{xcolor}
\definecolor{darkblue}{RGB}{0,0,128}
\definecolor{darkgreen}{RGB}{0,150,0}


\usepackage[pdfusetitle]{hyperref}
\hypersetup{breaklinks, colorlinks, linkcolor=blue, citecolor=darkgreen, filecolor=red, urlcolor=darkblue}
\usepackage{cleveref}


\newtheorem{theorem}{Theorem}
\newtheorem*{theorem*}{Theorem}
\newtheorem{lemma}[theorem]{Lemma}

\newtheorem{claim}[theorem]{Claim}
\newtheorem{corollary}[theorem]{Corollary}

\newtheorem{definition}[theorem]{Definition}

\usepackage{thmtools}
\usepackage{thm-restate}

\RequirePackage{filecontents}


\usepackage{mystyle}
\graphicspath{{./}{pics/}}

\usepackage{relsize}
\setlength{\belowcaptionskip}{-1pt}
\newcommand{\norm}[1]{\left| #1\right|}
\renewcommand{\epsilon}{\varepsilon}

\begin{document}
\title{Quantifying nonlocality: how outperforming local quantum codes is expensive}
\author{Nouédyn Baspin}
\affiliation{Universit\'e de Sherbrooke, Sherbrooke, Qu\'ebec, Canada J1K 2R1}
\author{Anirudh Krishna}
\affiliation{Stanford University, Stanford, CA, USA, 94305}

\date{\today}

  \begin{abstract}
  	Quantum low-density parity-check (LDPC) codes are a promising avenue to reduce the cost of constructing scalable quantum circuits.
    However, it is unclear how to implement these codes in practice.
    Seminal results of Bravyi \& Terhal, and Bravyi, Poulin \& Terhal have shown that quantum LDPC codes implemented through local interactions obey restrictions on their dimension $k$ and distance $d$. 
    Here we address the complementary question of how many long-range interactions are required to implement a quantum LDPC code with parameters $k$ and $d$.
    In particular, in 2D we show that a quantum LDPC code with distance $ d \propto n^{1/2 + \epsilon}$ requires $\Omega(n^{1/2 + \epsilon})$ interactions of length $\Omegalog(n^{\epsilon})$.
    Further a code satisfying $k \propto n$ with distance $d \propto n^\alpha$ requires $\Omegalog(n)$ interactions of length $\Omegalog(n^{\alpha/2})$.
    Our results are derived using bounds on quantum codes from graph metrics.
    As an application of these results, we consider a model called a stacked architecture, which has previously been considered as a potential way to implement quantum LDPC codes.
    In this model, although most interactions are local, a few of them are allowed to be very long.
    We prove that limited long-range connectivity implies quantitative bounds on the distance and code dimension.
  \end{abstract}
  \maketitle

\section{Introduction}
Finding ways to battle decoherence is among the foremost challenges on the path to implementing fault-tolerant quantum circuits.
Quantum error correcting codes can address this issue, and their efficacy is guaranteed by the quantum threshold theorem \cite{aharonov1997fault,kitaev1997quantum,knill1998resilient,aliferis2005quantum}.
The code we choose to use will be tailored to the advantages and disadvantages of the physical architecture on which it is implemented.
For instance, we might consider how many qubits we can measure jointly; how far apart qubits involved in such measurements need to be located; or how many supplementary qubits will be needed to implement a particular algorithm fault tolerantly \cite{ogorman2017quantum,sanders2020compilation}.
We will want the choice of code to be efficient and respect the limitations of our architecture.
Consequently, there is a strong interest in understanding how physical constraints on a system can impede the efficiency of a quantum code.

Formally, a quantum error correcting code $\cC$ on $n$ qubits is the common $+1$ eigenspace of a set of independent commuting $n$-qubit Pauli operators $\{\ssS_1,...,\ssS_m\}$, referred to as stabilizers, 
\begin{align*}
    \cC = \{\ket{\psi} : \ssS_i \ket{\psi} = \ket{\psi} \; \forall i \in \{1,...,m\} \}~.
\end{align*}
Measuring the stabilizers yields information required to detect and correct errors.
For ease of implementation, we may stipulate that these measurements be \emph{local} i.e.\ that the qubits involved in a stabilizer be contained within a ball of constant radius.
Let $k = \log_2 \dim \cC$ denote the number of encoded qubits \footnote{We often refer to $k$ as the number of \emph{logical} qubits, or as the \emph{dimension} of the code.}; we aim to encode as many qubits as possible with a limited number of available physical qubits.
Furthermore, let $d$ denote the distance; it is a measure of the number of physical qubits that need to be corrupted to irreparably damage encoded information.
Seminal works of Bravyi \& Terhal, and Bravyi, Poulin \& Terhal \cite{bravyi2009no,bravyi2010tradeoffs} demonstrated that there are sharp tradeoffs between $k$ and $d$ for all local codes.
As a result, locality limits our ability to reduce the resource cost of implementing scalable quantum circuits.
This naturally raises the following question---\textbf{Question 1:} to construct an error correcting code with dimension $k$ and distance $d$, how much nonlocality is needed to implement it?
How do we even quantify this seemingly nebulous notion of nonlocality?

Expanding our attention beyond local quantum codes is a worthwhile endeavor as certain architectures support interactions between arbitrary qubits.
Prominent examples are silicon-based architectures with photon-mediated interactions which encode qubits into the spin states of silicon \cite{bergeron2020silicon}, or photonic architectures where the qubits are directly encoded in the photons and therefore not localized \cite{bombin2021interleaving}.
Other architectures include atomic arrays \cite{periwal2021programmable}, where atoms are laid out along a single line, but long-range interactions can be used to simulate higher dimensions.
Ion trap architectures that support all-to-all connectivity in limited capacity has also been considered \cite{monroe2014large,linke2017experimental,murali2020architecting}.
By dropping the restriction of locality, these architectures can circumvent the limitations of local codes.
With this motivation, we consider quantum low-density parity-check (LDPC) codes, a class that subsumes all known topological codes \cite{kitaev1997quantum,bombin2006topological,bravyi1998quantum,kubica2015universal}.
The study of these codes is motivated by several results showing that quantum LDPC codes can drastically reduce the number of physical qubits required to build a fault-tolerant quantum computer \cite{gottesman2014fault,kovalev2013fault,fawzi2018constant}.
In practice, we wish to understand how to implement quantum LDPC codes in a $2$ or $3$-dimensional layout.
This then prompts the next question concerning locality---\textbf{Question 2:} can we implement good quantum LDPC codes using a setup where a majority of measurements are local?

In this paper, we address Questions 1 and 2.
Through Theorem \ref{thm:main} we show that quantum LDPC codes require large amounts of nonlocality between qubits when the dimension $k$ and the distance $d$ are large.
To motivate how to quantify nonlocality, we repeat an observation from \cite{baspin2021connectivity}.
It is not possible to add a limited number of long-range connections and significantly improve the performance of a local code.
Any code that we consider will have to have a sufficient number of long-range interactions to work.
Our quantification of nonlocality, therefore, in addition to the length of the long-range interactions, will also include the \emph{number} of such interactions.

We highlight codes for which $k \propto n$, and $d \propto n^\alpha$ for $\alpha > 0$, as these codes underpin the current proposals for low-overhead quantum computation.
Our results state that to implement these codes in 2D, we require roughly $n$ interactions of length $n^{\alpha/2}$. 
Therefore implementing these codes will require an architecture able to deal with a significant amount of nonlocality.
Although not known to exist, our results are also of interest for good codes i.e.\ constant-rate codes for which $\alpha = 1$.
If such codes exist, then they seem to make optimal use of long-range connectivity.
This is because in two dimensions the maximum distance between any two points on an $L \times L$ grid is proportional to $L \propto \sqrt{n}$, which would saturate our bound. 
Finally, our results suggest that it is expensive to outperform the distance of a local code.
For example, in 2D, Bravyi \& Terhal proved that local codes cannot do better than $d \propto n^{1/2}$; we show that any code satisfying $d \propto n^{1/2 + \epsilon}$ will require a growing number of long-range interactions.
Together, these results suggest that architectures limited to local interactions can only implement topological codes at best.

Next, as a model for implementations, we consider what we refer to as a \emph{stacked} layout \footnote{We are not aware of the origin of this model.
We heard about it through David Poulin;
something similar was also mentioned by Daniel Gottesman \cite{gottesman2021talk}.}.
This model is inspired by the schematic for a concatenated code shown in fig.~\ref{fig:stacked-pre}.
In the stacked model, qubits are placed on the vertices of a $2$-dimensional grid.
The measurements required to define the code are partitioned into multiple layers as visualized in fig.~\ref{fig:stacked-pre}.
Each layer of the stack represents stabilizers of a given interaction radius.
\begin{figure}[h]
	\centering
	\includegraphics[scale=0.12]{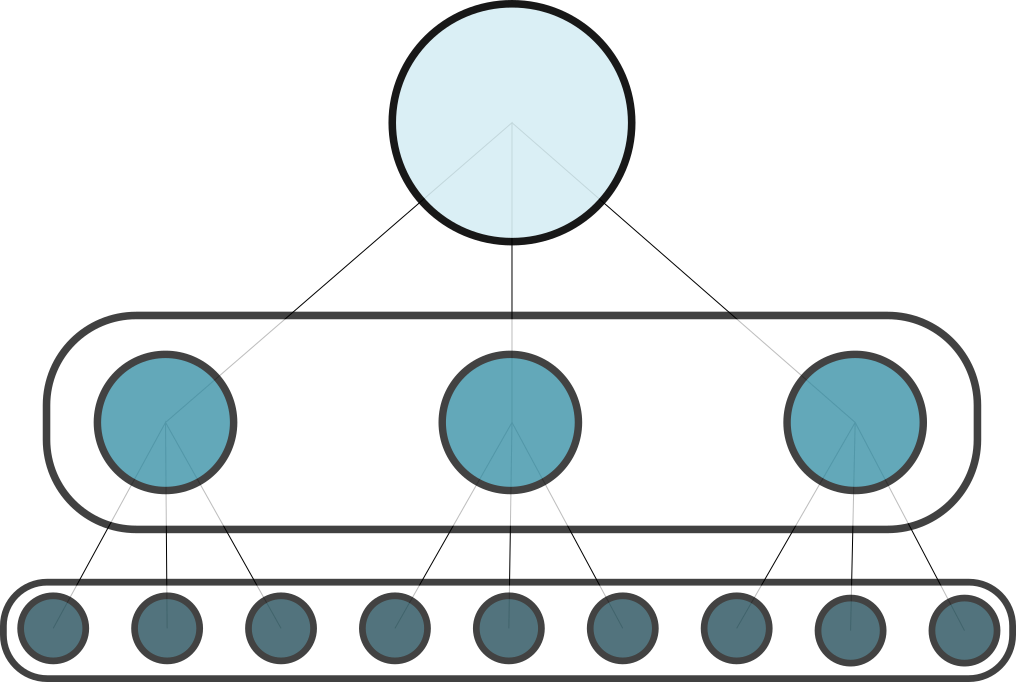}
    \includegraphics[scale=0.135]{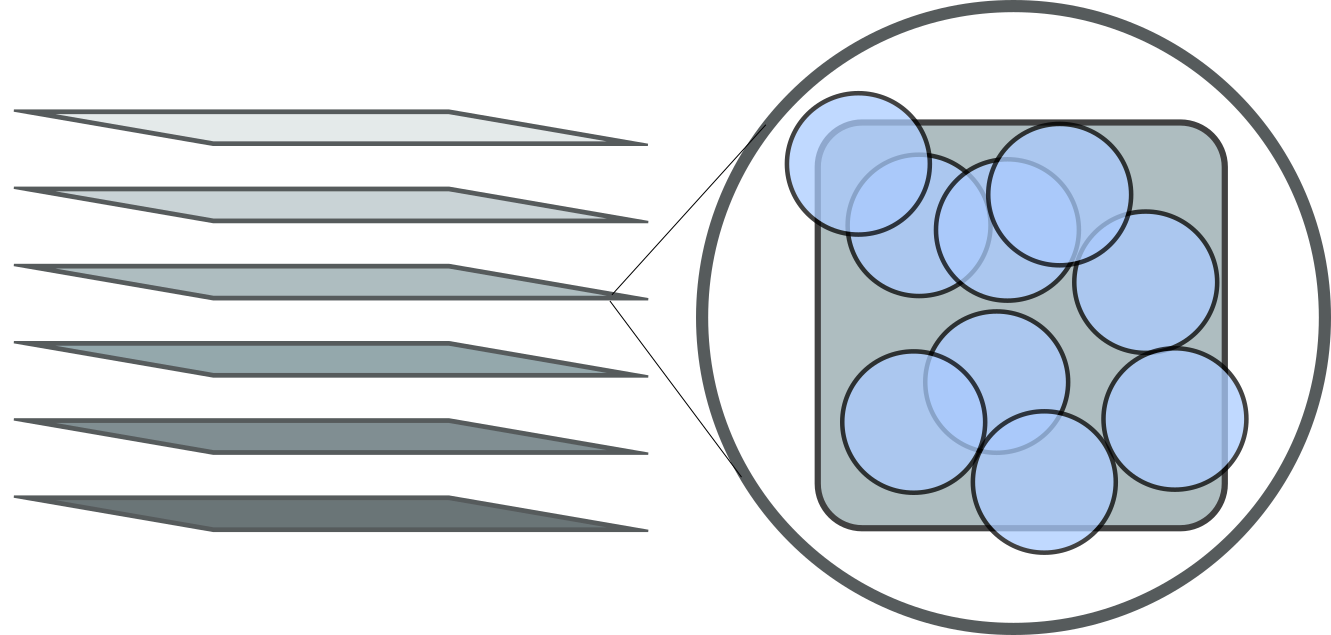}
	\caption{(a) A schematic for a concatenated code.
    The qubits of the code are themselves encoded in an error correcting code and this gives rise to a hierarchical structure.
    (b) A $2$-dimensional stacked architecture.
    Qubits are the bottom-most layer.
    Stabilizers, identified with their support, are assigned to different layers above and are depicted using blue circles.
    Stabilizers in a given layer have a radius of support depending on the layer.
    This interaction range increases as we move up the stack or equivalently the radius of the circles increases.
    On the other hand, the number of stabilizers in each layer decreases exponentially.}
	\label{fig:stacked-pre}
\end{figure}
The interaction range increases as we move up the layers of the stack while the number of stabilizers decreases.
The majority of stabilizers in this model are in the lower layers.
Therefore any code implemented by a stack is mostly local.
For this reason, this model has been considered a potential route to implement LDPC codes.
However, such an architecture cannot implement arbitrary quantum LDPC codes.
We show that $2$-dimensional stacked layouts are limited.
The distance is bounded by $d = \widetilde{O}(n^{2/3})$ and the dimension-distance tradeoff is $k^{3}d^{4} = \widetilde{O}(n^{5})$.
This is presented in Section \ref{sec:stacked}.
This shows that there are strong limitations to such models; however, it does not prevent implementations of constant-rate codes with distance scaling as $\sqrt{n}$.
Therefore, it may be possible to implement hypergraph product codes \cite{tillich2014quantum}.

\section{Background and intuition}

An $\dsl n, k, d \dsr$ quantum code $\cC$ is a $2^k$-dimensional subspace of the complex Euclidean space $\bbC^{2^n}$ associated with $n$ qubits.
The distance $d$ is the minimum number of qubits that are acted on nontrivially by a unitary operation to map one element of $\cC$ to another.
The codespace is specified as the joint $+1$-eigenspace of a set of commuting Pauli operators $\cS \subset \{\ssI,\ssX, \ssY, \ssZ\}^{\otimes n}$ called the stabilizer group.
Suppose the group is generated by some elements $\{\ssS_i\}_{i=1}^{n-k}$.
The code is said to be a low-density parity-check (LDPC) code if each generator only acts on a constant number of qubits, and each qubit is only involved in a constant number of generators.

We represent a quantum code $\cC$  on $n$ qubits using a \emph{connectivity graph} $G = G(\cC) = (V,E)$.
Each vertex $v \in V$ of $G$ corresponds to a qubit of $\cC$ and two vertices share an edge $e \in E$ if both qubits participate in the same measurement $\ssS_i$.
We quantify the connectivity of $G$ using the notion of a graph separator.
A separator $\sep(G) \subseteq V$ is a subset of vertices which, if removed, would split $G$ into two subgraphs that are disconnected from each other.
In other words, we may write the vertices of $G$ as a disjoint union $V = A \sqcup \sep(G) \sqcup B$ such that there are no edges between $A$ and $B$.
Furthermore, we require that $|A|, |B| \leq 2n/3 = 2 |V|/3$.

We define the separation profile $s_G : \mathbb{N} \rightarrow \mathbb{N}$ of a graph $G$ to be $s_G(r) = \max_{H \subseteq G, |H| \leq r} |\sep(H)|$. When we consider a family $\cG = \{G_n\}_n$, we write $s_n$ the separation profile of $G_n$. In the following sections, we will assume that $s_n(r) > 0$, or equivalently that the code is not trivial.
Of particular interest are families of expander graphs which we will return to later.
These graphs are very well connected; the separator for a family of expander graphs scales in proportion to the size of the graph, i.e.\ $s_n \propto n$.

In \cite{baspin2021connectivity}, we showed that there is an intimate relationship between the properties of a quantum code and the corresponding connectivity graph.
The connectivity here is quantified by the size of the separator.
Our result, stated formally below in Lemma \ref{lem:generalizedbounds}, has two parts to it.
First, the size of the separator bounds the distance of the code.
Next, the smaller the separator, the sharper the tradeoff between code parameters $k$ and $d$.
In its general form, the bound is stated in terms of the following quantities. 

\begin{definition}
    \label{def:cminmax}
    Let $\scrC = \{\cC_n\}_n$ be a family of $\dsl n,k(n),d(n)\dsr$ quantum LDPC codes with nontrivial connectivity graphs $\cG = \{G_n\}_n$ with associated separation profiles $\{s_n\}_n$.
    Consider the quantity $c_n(r) \equiv \log_r(s_n(r))$.
    For each $G_n \in \cG$, define the quantities $c_{\max}(n)$ and $r_0(n)$ as
    \begin{align*}
        c_{\max}(n) =  \max_{r \in [d,n]} c_n(r) \qquad r_0(n) = \arg\max_{r \in [d,n]} c_n(r)~.
    \end{align*}
\end{definition}
The quantity $c_{\max}$ measures how tightly the graph is connected by considering subgraphs of whose size lies in the interval $[d,n]$.
Consider fig.~\ref{fig:cmax} which shows a connectivity graph $G$ and subgraphs $H_0 \subseteq H_1 \subseteq G$.
Any subgraph $H_0$ such that $|H_0| = d$ may itself be tightly connected.
However, as we increase the size of the subgraph to $H_1$ and to $G$, the graphs may be easy to separate.
\begin{figure}[h]
    \centering
    \includegraphics[scale=.5]{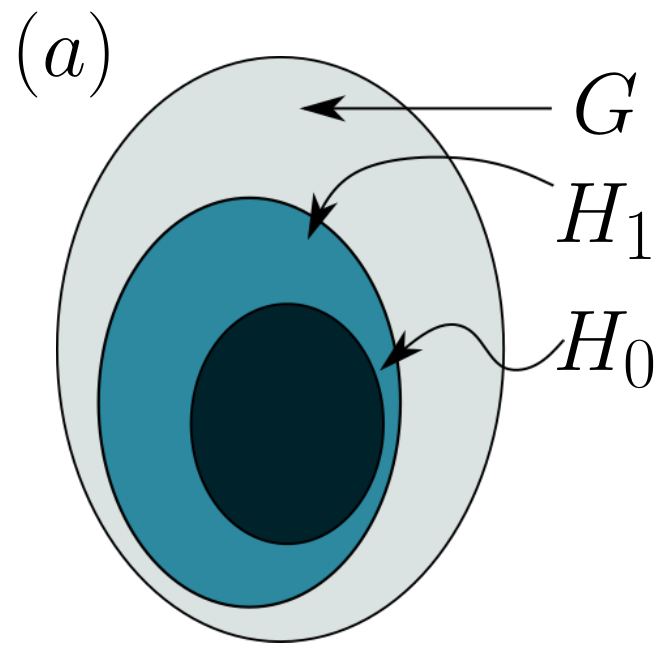}
    \hspace{15mm}
    \includegraphics[scale=.5]{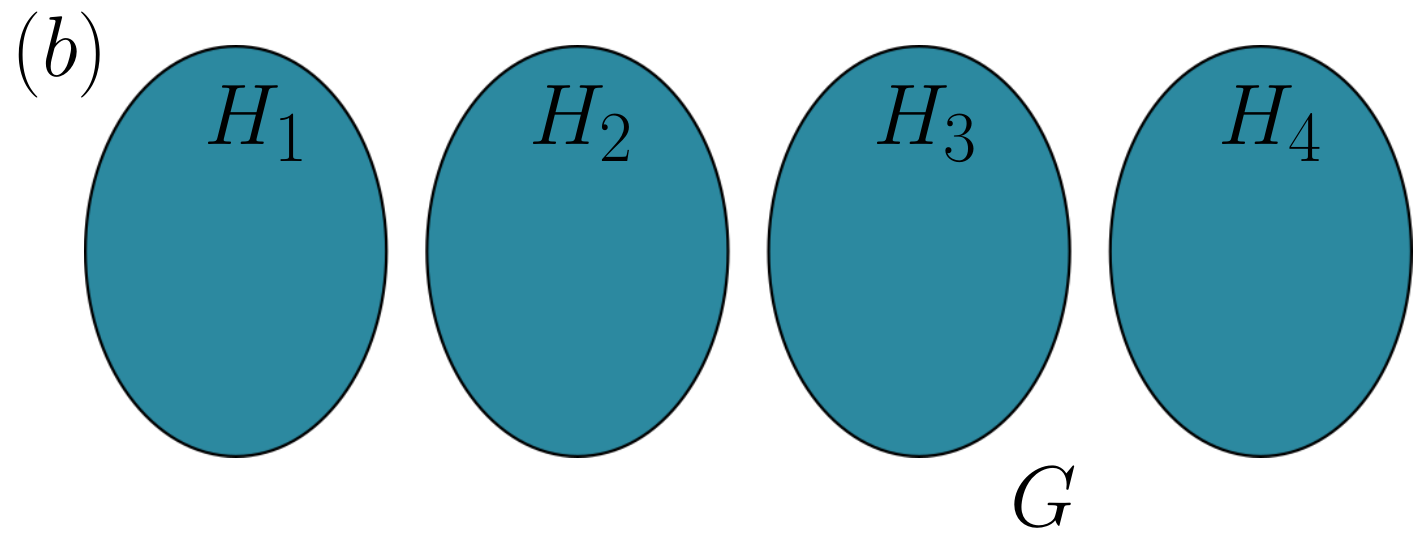}
    \caption{Visualizing $c_{\max}$ for a connectivity graph $G$.
    In (a) the subgraphs $H_0 \subseteq H_1 \subseteq G$ have sizes $d \leq |H_0| \leq |H_1| \leq |G|$.
    $H_0$ itself may be very tightly connected, but $H_1$ and $G$ need not be.
    As an example, consider (b) where the connectivity graph $G$ is made up of several disconnected expander subgraphs of size $n^{\alpha}$; there are $n^{1-\alpha}$ such subgraphs, and therefore $G$ has $n$ vertices in total.
    This implies that $c_{\max} = 1$.}
    \label{fig:cmax}
\end{figure}
As an example, consider a connectivity graph $G$ which corresponds to a set of $n^{1-\alpha}$ disconnected expander graphs as shown in fig.~\ref{fig:cmax} (b).
Each subgraph has size $n^{\alpha}$, and so $G$ has $n$ vertices in total.
Suppose it was known that $d = O(n^{\alpha})$.
If we consider small enough subgraphs, i.e.\ of size lesser than $n^{\alpha}$, then there exist subgraphs with large separators.
However, if we let $r = n^{\alpha}$, the largest separator corresponds to any subgraph $H_i$ and therefore $c_{\max} = 1$.
This sort of connectivity graph might show up naturally for example in Gottesman's construction of a fault-tolerant quantum circuit using LDPC codes \cite{gottesman2014fault}.
We say more about this construction following Theorem \ref{thm:main}.

For the sake of readability, we will simply write $k \equiv k(n)$ and $d \equiv d(n)$.
Note that there exists a subgraph $H \subseteq G_n$ such that $d \leq |H| \leq n$ and $|\sep(H)| = |H|^{c_{\max}}$.
We also note that for all $r$, we have that $s_n(r) \leq r^{c_{\max}(n)}$.
Further one can note that $|\sep(G)| \leq |V|/3$ for any graph $G$, as any set $S$ of size $|V|/3$ always induces a set $A = \emptyset$, and $B = V \setminus S$, such that $|A|, |B| \leq 2|V|/3$. Therefore we always have $c_{\max}(n) \leq \log_n(1/3) + 1 < 1$.

These quantities allow us to express the bounds on codes given the connectivity graph representation as presented in \cite{baspin2021connectivity}.
\begin{restatable}[Generalized bounds on codes]{lem}{generalizedbounds}
	\label{lem:generalizedbounds}
    Let $\scrC = \{\cC_n\}$ be a family of $\dsl n,k,d\dsr$ quantum LDPC codes with nontrivial connectivity graphs $\cG = \{G_n\}_n$.
    Let $c_{\max}(n)$ and $r_0(n)$ be defined as above, then 
	\begin{align*}
		d = O(s_n(n)) ~, \qquad k = O(d^{2(c_{\max}(n)-1)}n \log(n)^2)~.
    \end{align*}

	Further, if we have $c_{\max}(n) \leq c_0$ for a constant $c_0 \in (0,1)$, then
	\begin{align*}
		k = O(d^{2(c_{\max}(n)-1)}n)~.
	\end{align*}
\end{restatable}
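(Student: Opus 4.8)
The plan is to recover both inequalities from \emph{cleaning}-type arguments in the spirit of Bravyi \& Terhal and Bravyi, Poulin \& Terhal \cite{bravyi2010tradeoffs}, with the rigid grid geometry of those works replaced by the recursive separator structure recorded in $s_n$; this is the route taken in \cite{baspin2021connectivity}. Two standard facts do the heavy lifting. First, the cleaning lemma: a logical operator of a stabilizer code can be deformed by stabilizers so as to have support disjoint from any correctable region, and in particular any set of fewer than $d$ qubits is correctable. Second, a separator $\sep(H)$ of an induced subgraph $H \subseteq G$ splits the qubits of $H$ into parts $A, B$ across which \emph{no stabilizer generator} is supported, since each generator has connected support in $G$.

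For $d = O(s_n(n))$ I would argue by recursive bisection. Keep a vertex set $R_i \subseteq V$ that is \emph{uncorrectable} (supports a logical operator), starting from $R_0 = V$, which is uncorrectable as $k \ge 1$. Given $R_i$, take a separator of $G[R_i]$ of size at most $s_n(|R_i|) \le s_n(n)$, so $R_i = A_i \sqcup S_i \sqcup B_i$ with $|A_i|, |B_i| \le \tfrac{2}{3}|R_i|$. At least one of $A_i \cup S_i$, $B_i \cup S_i$ is still uncorrectable: otherwise, since no generator straddles $A_i$ and $B_i$, the cleaning lemma provides stabilizers supported inside $R_i$ that push a logical operator of $R_i$ off both $A_i$ and $B_i$, contradicting that $R_i$ is uncorrectable. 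Let $R_{i+1}$ be that surviving half-plus-separator, so $|R_{i+1}| \le \tfrac{2}{3}|R_i| + s_n(n)$. After $O(\log n)$ steps $|R_i|$ drops to $O(s_n(n))$, and an uncorrectable set of that size contains a logical operator of weight $O(s_n(n))$; hence $d = O(s_n(n))$.

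For the tradeoff I would recursively split $V$ with separators until every piece $R_j$ has size at most $d/2$, collecting all separator sets used into a set $T$. Because $s_n(r) \le r^{c_{\max}(n)}$ and $c_{\max} < 1$, the total separator weight added at each level forms a subgeometric series dominated by the finest level, giving $\sum_j |\partial R_j| = O\!\big(n\, d^{c_{\max}-1}\big)$ and $\sum_j |\partial R_j|^2 = O\!\big(n\, d^{2c_{\max}-1}\big)$, where $\partial R_j$ denotes the qubits adjacent to $R_j$. Each $R_j$ is correctable since $|R_j| < d$. The crux is then a dimension count on the symplectic space of logical Paulis modulo stabilizers, as in \cite{bravyi2010tradeoffs}: a correctable region can carry at most $O(|\partial R_j|^2 / d)$ independent logical degrees of freedom, because packing more would, after cleaning, yield a logical operator supported on $\partial R_j$ and part of $R_j$ of weight below $d$. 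Summing over $j$ gives $k = O\!\big(n\, d^{2(c_{\max}-1)}\big)$ up to polylogarithmic factors; the $\log(n)^2$ in the general statement tracks the recursion depth and the maximization over the $\Theta(\log n)$ scales $r \in [d,n]$ built into $c_{\max}$. When $c_{\max} \le c_0 < 1$, these geometric series converge with a fixed ratio and a single partition into pieces of size $\Theta(d)$ works, removing the logarithmic overhead.

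The step I expect to be the main obstacle is the dimension count: converting ``no logical operator of weight below $d$'' together with ``small separators'' into the quantitative per-piece bound $O(|\partial R_j|^2/d)$ requires the delicate linear-algebraic analysis of Bravyi, Poulin \& Terhal of how logical operators restrict to a region and its boundary, and one must check that the recursive decomposition preserves the telescoping of boundary sizes just as a grid partition does. A secondary point needing care is verifying that the cleaning stabilizers in the distance argument can indeed be localized to the current region $R_i$.
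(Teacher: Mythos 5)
First, a point of reference: this paper does not actually prove Lemma \ref{lem:generalizedbounds}; it imports it from \cite{baspin2021connectivity} (``For an in-depth discussion of this lemma, including the proof, we point the interested reader to...''). So I am comparing your sketch against the argument in that reference. Your overall strategy---replace the grid partitions of Bravyi--Terhal and BPT by recursive separator decompositions, then run the union/expansion/cleaning machinery---is indeed the right family of ideas and is essentially how the $k$ bound is obtained there.

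However, your argument for $d = O(s_n(n))$ has a genuine gap at the recursion step. You claim that if $R_i$ is uncorrectable and $R_i = A_i \sqcup S_i \sqcup B_i$, then at least one of $A_i \cup S_i$, $B_i \cup S_i$ is uncorrectable, ``otherwise cleaning pushes a logical off both $A_i$ and $B_i$.'' The Union Lemma does not apply here: $A_i \cup S_i$ and $B_i \cup S_i$ both contain $S_i$, so they are not decoupled, and the union of two \emph{overlapping} correctable regions need not be correctable. Concretely, take the toric code and let $R_1$ be a non-contractible annulus (which is exactly what your first iteration produces from $R_0 = V$: half the torus plus the separator slabs). $R_1$ supports a nontrivial logical, yet a balanced separator of $G[R_1]$ consists of two short radial cuts splitting the annulus into two contractible arcs; both arc-plus-cuts regions are correctable. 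So the invariant cannot be propagated past $i=0$ (at $i=0$ it does hold, since $V$ cannot be covered by two correctable sets, but that alone says nothing about $d$). The proof in \cite{baspin2021connectivity} is structured differently: one takes a \emph{minimum-weight} logical operator $P$ with support $M$, $|M|=d$, shows $G[M]$ is connected (else $P$ factors into a lower-weight nontrivial logical), and then shows that a balanced separator of $G[M]$ of size $\sigma = o(d)$ is itself correctable and can be cleaned locally, splitting $P$ into pieces of weight at most $2d/3 + O(\sigma)$, one of which is a nontrivial logical---contradicting minimality unless $\sigma = \Omega(d)$. This exhibits a subgraph of size $d$ with no balanced separator smaller than $\Omega(d)$, whence $s_n(n) \geq s_n(d) = \Omega(d)$. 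For the tradeoff, your decomposition into $\Theta(d)$-sized correctable pieces with separator boundaries is the right shape and your exponent bookkeeping ($\sum_j |\partial R_j|^2/d = O(n d^{2(c_{\max}-1)})$) is consistent with the statement, but the two steps you flag---the per-region count of logical degrees of freedom and the localization of cleaning---are precisely the content of the proof and are asserted rather than established; as written the sketch would not let a reader reconstruct the bound.
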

For an in-depth discussion of this lemma, including the proof, we point the interested reader to \cite{baspin2021connectivity}.

\section[]{Embedding codes in \texorpdfstring{$D$}{Lg}-dimensions}

In this section, we consider how to embed quantum LDPC codes in $\bbR^D$.
This section is inspired by results from metric geometry that consider the distortion of expander graphs embedded in $\bbR^D$.
Here we show that a class of graphs called $\epsilon$-expanders are difficult to embed.
As a consequence, we show that constant-rate quantum codes require a growing number of long-range interactions between qubits.

\begin{definition}
	For a graph $G = (V,E)$, a map $\eta: V \rightarrow \bbR^D$ is called an embedding.
    Further, $\eta_{\theta}$ is a $\theta$-embedding if it satisfies the following condition for all pairs of distinct vertices $u,v \in V$,
	\begin{align*}
	    \norm{\eta_{\theta}(u) - \eta_{\theta}(v)} \geq \theta~.
	\end{align*}
    We use $|\, .\, | : \bbR^D \rightarrow \bbR$ to denote the standard Euclidean metric.
\end{definition}
For example, if we were to embed qubits in a $2$-dimensional grid with the points unit distance apart, we would let $\theta = 1$.
In the following sections, we will frequently refer to the length of an edge.
We mean that any embedding $\eta_\theta$ naturally endows an edge $(u,v)$ with a length.
Equivalently, the length of an edge $(u,v)$ is $|\eta_\theta(u) - \eta_\theta(v)|$. 

\begin{theorem}[Main]
    \label{thm:main}
    Let $\scrC = \{\cC_n\}$ be a family of $\dsl n,k, d \dsr$ quantum LDPC codes
    Further suppose $\scrC$ is associated with the nontrivial connectivity graphs $\cG = \{G_n = (V_n, E_n)\}_n$.
    For any $\theta$-embedding $\eta_{\theta}: V_n \to \bbR^D$, there exists some $\beta, n_0$ such that for code sizes $n > n_0$, and any $\alpha \in (0,1)$, the following propositions hold:
	\begin{enumerate}
		\item $\eta_{\theta}$ induces $\Omega(d)$ edges of length $\Omegalog\left(\frac{d}{n^{(D-1)/D}}\right)$.
		\item $\eta_{\theta}$ induces $\Omegalog\left(\sqrt{\frac{k}{n}}d\right)$ edges of length $\Omegalog\left(\sqrt{\frac{k}{n}}d^{1/D}\right)$.
		\item If $kd^{2/D} \geq \beta  n \log(n)^2/(1-\alpha)$, then $\eta_{\theta}$ induces at least $\Omegalog\left(\sqrt{\frac{(1-\alpha)k}{n}}^{1/\log_n(d)} \alpha k \right)$ edges of length $\Omegalog\left(\sqrt{\frac{(1-\alpha)k}{n}}d^{1/D}\right)$.
	\end{enumerate}
\end{theorem}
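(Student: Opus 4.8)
The plan is to reduce all three claims to one geometric mechanism. The mechanism has two pieces. First, a \emph{geometric separator lemma}: for any $\theta$-embedding of $m$ points into $\bbR^D$ and any scale $\ell>0$, the graph whose edges are the pairs at Euclidean distance $<\ell$ has a $2/3$-balanced vertex separator of size $O_D\big((\ell/\theta)\,m^{1-1/D}\big)$. Second, a \emph{forced long edges} principle: if every $2/3$-balanced vertex separator of a subgraph $H\subseteq G_n$ has size $\ge\sigma$, then any $\theta$-embedding places, for every $\ell>0$, at least $\sigma-O_D\big((\ell/\theta)\,|H|^{1-1/D}\big)$ edges of $H$ at length $\ge\ell$; taking $\ell$ so the error term equals $\sigma/2$ yields at least $\sigma/2$ edges of length $\ge c_D\,\theta\,\sigma/|H|^{1-1/D}$. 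Each part then amounts to exhibiting a subgraph $H$ with a large balanced separator and reading off the count and the length.

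I would prove the geometric separator lemma by rescaling to $\theta=1$, placing a radius-$\ell$ ball around each point, noting that $1$-separation forces this family to have ply $O(\ell^D)$, and applying the Miller--Teng--Thurston--Vavasis sphere-separator theorem for low-ply ball systems: it gives a sphere with at most $\tfrac{D+1}{D+2}m$ centers on each side and only $O(\ell\,m^{1-1/D})$ balls meeting it, and the centers whose ball meets the sphere form a separator (no length-$<\ell$ edge crosses). A constant (in $D$) number of recursive applications to the heavier side turns the $\tfrac{D+1}{D+2}$ balance into $\tfrac23$ at the cost of an $O_D(1)$ factor. The forced-long-edges principle is then immediate: run the geometric lemma on the length-$<\ell$ subgraph of $H$, and note that adjoining one endpoint of each length-$\ge\ell$ edge of $H$ makes it a balanced separator of $H$, hence of size $\ge\sigma$; the LDPC hypothesis is used only to keep all graphs involved bounded-degree, so with $O(n)$ edges.

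For Part~1, take $H$ to attain $s_n(n)$, so $|H|\le n$ and every balanced separator has size $\ge s_n(n)=\Omega(d)$ by $d=O(s_n(n))$ from Lemma~\ref{lem:generalizedbounds}; the principle gives $\Omega(d)$ edges of length $\Omega\big(\theta\,d/n^{1-1/D}\big)=\Omegalog\big(d/n^{(D-1)/D}\big)$. For Part~2, take $H=H_0$, the subgraph noted in the excerpt with $d\le|H_0|\le n$ and every balanced separator of size $\ge|H_0|^{c_{\max}(n)}$. Since $c_{\max}(n)>0$ and $|H_0|\ge d$, this is $\ge d^{c_{\max}(n)}=d\cdot d^{c_{\max}(n)-1}$, and rearranging the $k$-bound of Lemma~\ref{lem:generalizedbounds} gives $d^{c_{\max}(n)-1}=\Omegalog(\sqrt{k/n})$, so the count is $\ge\Omegalog(\sqrt{k/n}\,d)$; the edges produced lie at length $\Omega\big(\theta\,|H_0|^{c_{\max}(n)-1+1/D}\big)$, which in the nondegenerate regime $c_{\max}(n)\ge 1-1/D$ is $\ge\Omega\big(\theta\,d^{c_{\max}(n)-1+1/D}\big)=\Omega\big(\theta\,d^{1/D}\,d^{c_{\max}(n)-1}\big)=\Omegalog\big(\sqrt{k/n}\,d^{1/D}\big)$; when $c_{\max}(n)<1-1/D$ the $k$-bound forces $\sqrt{k/n}\,d^{1/D}=O(\log n)$, so the asserted length is below $\theta$ and all $\Theta(n)\ge\Omegalog(\sqrt{k/n}\,d)$ edges already witness the claim.

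Part~3 is where the real work is, and the step I expect to be the main obstacle. The subgraph $H_0$ of the excerpt yields only $\sim|H_0|^{c_{\max}(n)}$ long edges, and since $|H_0|$ may be as small as $d$ this can be far below $\alpha k$; producing a count of order $\alpha k$ demands a genuinely large subgraph of $G_n$, of size $\Omega(\alpha k)$, all of whose balanced separators are large --- an $\epsilon$-expander on $\Omega(\alpha k)$ vertices. Extracting it with the stated expansion $\epsilon=\Omegalog\big(\sqrt{(1-\alpha)k/n}^{\,1/\log_n d}\big)$ is exactly what the recursive cleaning/cutting machinery behind Lemma~\ref{lem:generalizedbounds} in \cite{baspin2021connectivity} delivers; the hypothesis $kd^{2/D}\ge\beta n\log(n)^2/(1-\alpha)$ is precisely what, together with the $k$-bound, forces $c_{\max}(n)-1+1/D\ge\tfrac{\log(1/(1-\alpha))}{2\log d}\ge 0$, guaranteeing this expander is too connected to embed in $\bbR^D$, and the exponent $1/\log_n d$ comes from converting separation-profile control available at scales up to $n$ into an expansion estimate at the scale $\Omega(\alpha k)$. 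Feeding this expander (balanced separator $\Omega(\epsilon\,\alpha k)$, size $\Theta(\alpha k)$) into the forced-long-edges principle produces $\Omegalog\big(\sqrt{(1-\alpha)k/n}^{\,1/\log_n d}\alpha k\big)$ edges of length $\Omega\big(\theta\,\epsilon\,(\alpha k)^{1/D}\big)$, which for large $n$ and $\alpha$ bounded away from $0$ and $1$ exceeds the stated $\Omegalog\big(\sqrt{(1-\alpha)k/n}\,d^{1/D}\big)$. The delicate pieces are (i) the extraction lemma with its exact parameters, (ii) tracking the $\mathrm{polylog}$ factors hidden by $\Omegalog$, and (iii) disposing of the degenerate ranges where the thresholds fall below $\theta$ and the statement holds vacuously.
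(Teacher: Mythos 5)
Your reduction of Claims 1 and 2 to a geometric separator lemma (low-ply balls plus Miller--Teng--Thurston--Vavasis, then ``every edge of $H$ crossing the sphere separator must be long'') is a valid alternative to the paper's route, which instead goes through $\epsilon$-expansion: the paper converts a dense subgraph into an expander subgraph (Lemmas \ref{lem:bottsep} and \ref{lem:tdenseexpander}), shows expansion survives deletion of the $\Theta(n\epsilon)$ longest edges (Lemma \ref{lem:stillexp}), and then applies a volume-packing bound whose length estimate carries a $1/\log(n)$ loss inherited from the expander diameter (Claim \ref{claim:epsilondiam}). Your sphere-separator mechanism bypasses the diameter step entirely and, if carried out, would remove the $\log(n)$ factor from the edge lengths --- a factor the paper explicitly conjectures to be superfluous. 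So for Claims 1 and 2 your approach is sound and in fact sharper in the polylogs.

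The gap is in Claim 3. You assert that the machinery behind Lemma \ref{lem:generalizedbounds} ``delivers'' a single subgraph of size $\Omega(\alpha k)$ all of whose balanced separators are large, i.e.\ one expander on $\Omega(\alpha k)$ vertices. No such subgraph exists in general, and the cited reference does not produce one. Take $\Theta(k/d)$ disjoint blocks of a linear-distance code on $\Theta(d)$ qubits each (the configuration of fig.~\ref{fig:cmax}(b)): then $k\propto n$, $d\propto n^{\gamma}$, the hypothesis of Claim 3 holds for large $n$, yet any subgraph spanning several components admits a very small (even empty) balanced separator by grouping whole components into the two sides, so the largest expander subgraph has only $\Theta(d)\ll\alpha k$ vertices. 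A single application of your forced-long-edges principle to one dense subgraph can therefore never certify $\Theta(\alpha k)$ long edges. The paper's proof instead \emph{peels}: it extracts a dense subgraph $H_{n,1}$, deletes it, argues via the correctability/recursive-separation bound that the remainder must still contain a dense subgraph with $c_{\max,i}(n)$ bounded below (otherwise $k\leq d^{2(c_{\max,j+1}(n)-1)}n+\sum_{i}|H_{n,i}|$ would fail while $\sum_i|H_{n,i}|\leq\alpha k$), and iterates until the disjoint pieces collectively cover $\alpha k$ vertices; the advertised edge count is the \emph{sum} of their contributions. Your geometric mechanism is compatible with this peeling (apply it to each $H_{n,i}$ separately and add), but the peeling itself --- in particular the lower bound on each $c_{\max,i}(n)$ after the preceding deletions --- is the missing idea, and substituting ``one big expander'' for it makes the step fail.
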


The proof is presented is Section \ref{subsec:proofmain} and Sections \ref{subsec:epsexp} and \ref{subsec:tdens} establish the tools required.
We first present a short discussion of the theorem.

\textbf{Discussion:}

As a reminder, an edge of length $l$ implies that there exist a stabilizer measurement involving at least two qubits which are embedded at a distance at least $l$ from each other.
We say that such stabilizer has range at least $l$.
If an embedding induces $m$ edges of length $l$, then, since the codes we consider are LDPC, there exist at least $\Theta(m)$ stabilizers of range at least $l$.

\begin{enumerate}
	
\item We focus on the case $D = 2$.
The first observation is that a code of distance $\Omega(n^{1/2 + \epsilon})$ will induce $\Omega(n^{1/2 + \epsilon})$ edges of length $\Omegalog(n^{\epsilon})$ from Claim 1.
This underlines how hard it is to break free of the natural restrictions space imposes on the distance: the case $\epsilon = 0$  can be obtained readily using topological codes and only nearest neighbors interactions, but $\epsilon > 0$ will require a significant amount of nonlocality.
In particular, implementing a linear distance code will induce $\Omega(n)$ edges of length $\Omegalog(n^{1/2})$.
In that particular case, the length of the edges are tight up to logarithmic factors, since any code can be implemented on a $\sqrt{n} \times \sqrt{n}$ square lattice such that all qubits are at a distance at most $O(n^{1/2})$ from each other.
In $D$ dimensions, this result can also be seen as a more general version of the Bravyi-Terhal claim \cite{bravyi2009no}---if the code is local, then the longest edges of its connectivity graph have length $O(1)$, and applying Claim 1 implies that $d = \Olog(n^{(D-1)/D})$.

\item Similarly, our results yield nontrivial bounds on codes with constant rate.
First, consider the case with $k \propto n$ and $d \propto 1$.
Such a code can be achieved using $\Theta(n)$ disjoint patches of a $2$D topological code, and this implementation requires zero nonlocal interactions.
However, Claim 3 shows that escaping from this constant distance is challenging.
For example, achieving $d \propto n^\alpha$ requires $\Omegalog(n)$ interactions of length $\Omegalog(n^{\alpha/2})$: quite a dramatic change. Similarly as in the previous point, Claim 2 \& 3 can be read as a weaker and more general version of BPT: if the size of the interactions are in $O(1)$, then $kd = \Olog(n)$.

\item If good codes exist --- codes for which $k,d \propto n$ --- then they seem to make optimal use of nonlocality, as they almost saturate Claim 3.
For example, we could implement $n^{1-\alpha}$ disjoint blocks of good codes, each with size $n^{\alpha}$.
Then we have $k \propto n$, $d \propto n^{\alpha}$, and at most $O(n)$ edges of length $n^{\alpha/2}$, which minimizes the bound as discussed in the previous point.
This suggests that if good quantum codes were to exist, they will likely be essential in decreasing the experimental cost of quantum error correction.

\item As previously mentioned, there is a notable gap between BPT and our results: $kd^2 = O(n)$ in the first case, and $kd = \Olog(n)$ in the second. 
It is then worth asking if we can close the gap between these bound.
Can Claim 2 \& 3 be sharpened to yield a nontrivial bound on codes satisfying $kd^{2/(D-1)} = \Omega(n)$?
This question does not seem to be trivial to us.
Suppose we naively substitute $\sqrt{k/n}d^{1/D}$ by $\sqrt{k/n}d^{1/(D-1)}$ such that $O(1)$ interactions imply $kd^2 = O(n)$.
Then, in 2D, for any distance larger than $n^{1/2+\epsilon}$ and constant rate, we get some edges that are larger than $n^{1/2+\epsilon}$.
However, this is impossible: we can always place the qubits in a $\sqrt{n} \times \sqrt{n}$ square with edges of length $O(\sqrt{n})$.
This seems to imply that if that substitution worked, there exists no constant-rate quantum LDPC code with a distance larger than $\sqrt{n}$, which would be surprising.

\item We conjecture that the $\log(n)$ factors in the length of the edges is suboptimal.
In other words, we believe that $\Omegalog$ is just $\Omega$.
We are, however, unable to prove this with our techniques.

\item 
What do these results mean for fault-tolerant quantum computation?
To be concrete, consider Gottesman's construction \cite{gottesman2014fault}.
It allows us to use any LDPC code of rate $R$; we partition the $k$ logical qubits we wish to process into several blocks, each of which is encoded with an LDPC code.
Each block has size proportional to $k/(R\;\poly\log(k))$.
If we were to use a hypergraph product code class in this construction, then such codes can achieve a constant rate and a distance that scales as the square root of the total number of qubits.
How difficult would it be to embed this construction in two dimensions?
It would require a constant fraction of vertices that are connected to edges of length $\Omega((k/R\;\poly\log(k))^{1/4})$.
In this sense, Gottesman's construction does not require edges as long as would be needed if computation were performed on a single block of LDPC code
\footnote{Our understanding of how to perform fault-tolerant quantum computation using only a single block of an LDPC code is limited; see for example \cite{krishna2021fault}.}.

\end{enumerate}

\subsection[]{\texorpdfstring{$\epsilon$}{Lg}-expansion and \texorpdfstring{$D$}{Lg}-dimensional embeddings}
\label{subsec:epsexp}
In the previous section, we considered graph families $\cG$ with separation profiles $\{s_n\}_n$.
However, separability is not directly amenable to discussions of embeddings.
We find it useful to work with a slightly different notion of graph connectivity called $\epsilon$-expansion.
We show that if a graph is an $\epsilon$-expander, then every embedding $\eta_{\theta}$ that embeds the graph in $D$ dimensions will induce roughly $n\epsilon$ edges of length roughly $n^{1/D}\epsilon$ (up to log factors).
To make the connections explicit, we begin by restricting our attention to individual graphs.
We then extend these results to graph families.

As intuition for this subsection, we note that we cannot arbitrarily embed any graph on a grid and always expect the embedding to preserve the lengths of edges.
The extent to which these lengths can change is captured by what we define as the stretch.
If the stretch is large, this implies that there exists an edge in the embedding spanning a large distance.
As the vertices in the connectivity graph represent qubits, a long edge means that two qubits that are far from each other are involved in the same measurement.

\begin{definition}
	For an embedding $\eta_{\theta}: V \to \bbR^D$, we define the stretch $\dist(\eta_{\theta})$ as the longest induced distance between any two neighboring vertices. 
	\begin{align*}
	    \dist(\eta_{\theta}) = \max_{(u,v)\in E} \norm{\eta_{\theta}(u)- \eta_{\theta}(v)}~.
	\end{align*}
\end{definition}
The stretch is just the longest length of an edge $(u,v) \in E$ under the embedding $\eta_{\theta}$.
It can be used to upper bound the distance between two qubits in the physical space $\bbR^D$.
This quantity is inspired by and related to the distortion of an embedding \cite{matouvsek2013lecture}.
However, to avoid confusion, we have given this quantity a different name.
The first difference is that we are evaluating this quantity only over the edges of the graph.
Second, we disregard the amount by which the distances between points may be contracted.
This is all that matters for our purposes.
The distortion itself is the product of what we call the stretch evaluated for any two vertices and what might be called the contraction: the extent to which the distance between points has been shrunk.

\begin{definition}[$\epsilon$-vertex-expansion]
    \label{def:epsexpansion}
    Consider a graph $G = (V,E)$ on $n$ vertices.
    For $A \subseteq V$, let $\bdry A$ be the number of vertices of $V \setminus A$ that are connected to $A$, i.e.\
    \begin{align*}
        \bdry A = \{v \in V\setminus A : \exists u \in A, (u,v) \in E\}~. 
    \end{align*}
    We say $G$ is an $\epsilon$-expander, $\epsilon \in [0,1]$, if  
	\begin{align*}
	\min_{A \subset V, |A| \leq |V|/2} \frac{|\bdry A|}{|A|} \geq \epsilon
    \end{align*}
\end{definition}
Observe that $\epsilon \leq 1$ when $n$ is even, because we can let $|A| = |V|/2$, and $\epsilon \leq 1 + 2/(n-1)$ when $n$ is odd.

The proof that $\epsilon$-expanders are difficult to embed will involve a packing argument.
We first quantify how large the embedding of a graph will be under $\eta_{\theta}$.
The stretch implies an upper bound on how far two arbitrary vertices can be.
This is captured by the following lemma.

\begin{lemma}
	\label{lem:stretch-bounds-distance}
	Let $G = (V,E)$ and $\eta_{\theta}: V \to \bbR^{D}$ be a $\theta$-embedding, then for all $u,v \in V$
	\begin{align*}
        \norm{\eta_{\theta}(u) - \eta_{\theta}(v)} \leq \dist(\eta_{\theta}) d_G(u,v)~.
	\end{align*}
\end{lemma}
\begin{proof}
	For any two $u,v \in V$, let $P(u,v) = \{u_0,u_1,u_2,...,u_{d_G(u,v)}\}$ be the path of minimum length between $u =: u_0$ and $v =: u_{d_G(u,v)}$.
	By the triangle inequality, it follows that
	\begin{align*}
		\norm{\eta_{\theta}(u) - \eta_{\theta}(v)} \leq \sum_{i=0}^{d_G(u,v)-1}\norm{\eta_{\theta}(u_i) - \eta_{\theta}(u_{i+1})} \leq \dist(\eta_{\theta}) d_G(u,v)~.
	\end{align*}
    This concludes the proof.
\end{proof}

The maximum distance between any two vertices is called the diameter $\diam(G)$ of a graph $G$,
\begin{align*}
    \diam(G) = \max_{(u,v)\in E} d_G(u,v)~.
\end{align*}
\begin{claim}
    \label{claim:epsilondiam}
    Let $G$ be an $\epsilon$-expander of bounded degree.
    Then $\diam(G) = O\left(\log(n)/\epsilon\right)$.
\end{claim}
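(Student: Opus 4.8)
The plan is to show that in an $\epsilon$-expander of bounded degree, balls grow geometrically in their radius until they cover a constant fraction of the vertices, and then to observe that any two vertices must lie within a bounded number of steps of each other. Concretely, fix a vertex $v \in V$ and let $B_r = \{u \in V : d_G(u,v) \leq r\}$ be the ball of radius $r$ around $v$. As long as $|B_r| \leq |V|/2$, the $\epsilon$-expansion condition of Definition~\ref{def:epsexpansion} applies to $A = B_r$, giving $|\bdry B_r| \geq \epsilon |B_r|$. Since every vertex in $\bdry B_r$ is at distance exactly $r+1$ from $v$, we get $|B_{r+1}| = |B_r| + |\bdry B_r| \geq (1+\epsilon)|B_r|$. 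Iterating, $|B_r| \geq (1+\epsilon)^r$ for all $r$ with $|B_{r-1}| \leq |V|/2$.

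The first step is therefore to turn this recursion into a bound on the radius $R$ at which the ball first exceeds $|V|/2$ vertices: from $(1+\epsilon)^{R} \leq |B_R|$ and the fact that $|B_{R-1}| \leq |V|/2 \leq |B_R|$, we obtain $(1+\epsilon)^{R-1} \leq |V|/2$, hence $R - 1 \leq \log(|V|/2)/\log(1+\epsilon)$. Using $\log(1+\epsilon) = \Theta(\epsilon)$ for $\epsilon \in (0,1]$ (and handling the degenerate cases $\epsilon$ near its maximum separately, where the bound is trivially $O(1)$), this gives $R = O(\log(n)/\epsilon)$. The second step is to upgrade ``ball of radius $R$ has more than half the vertices'' to a diameter bound: for any two vertices $x, y$, the balls $B_R(x)$ and $B_R(y)$ each contain strictly more than $|V|/2$ vertices, so they intersect; picking $z$ in the intersection gives $d_G(x,y) \leq d_G(x,z) + d_G(z,y) \leq 2R = O(\log(n)/\epsilon)$. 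This establishes $\diam(G) = O(\log(n)/\epsilon)$ as claimed.

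I do not expect a serious obstacle here; the argument is the standard ``expansion implies logarithmic diameter'' fact, and boundedness of the degree is not even strictly needed for this direction (it is needed elsewhere, e.g.\ to relate $\epsilon$-expansion to separators). The one point requiring mild care is the edge case where $|B_r|$ jumps from at most $|V|/2$ to all of $V$ in a single step, or where $\epsilon$ is so large that $R$ is already $O(1)$; in either case the stated bound holds a fortiori. A second minor bookkeeping point is the inequality $\log(1+\epsilon) \geq \epsilon/2$ (valid for $\epsilon \in [0,1]$), which is what converts the clean recursion into the advertised $1/\epsilon$ dependence; this is elementary. Thus the main content is the geometric ball-growth recursion, and the rest is routine.
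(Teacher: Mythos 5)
Your proof is correct, but it takes a different route from the paper. The paper's proof is spectral: it cites a diameter bound $\diam(G) = O\bigl(\sqrt{\delta/\lambda_2}\,\log n\bigr)$ from the Handbook of Linear Algebra (where $\lambda_2$ is the spectral gap of the Laplacian and $\delta$ the maximum degree) and then invokes the Cheeger inequality $\epsilon \leq \sqrt{2\lambda_2}$ to convert this into $O(\log(n)/\epsilon)$. Your argument is the self-contained combinatorial one: balls around a fixed vertex grow by a factor $(1+\epsilon)$ per step while they contain at most half the vertices (since $B_{r+1}\setminus B_r = \bdry B_r$ and the vertex-expansion condition applies directly to $A = B_r$), so a ball of radius $R = O(\log(n)/\epsilon)$ exceeds $n/2$ vertices, and two such balls must intersect, giving $\diam(G) \leq 2R$. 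Both proofs are valid; yours has the advantage of being elementary, of avoiding the (bounded-degree-dependent) translation between vertex expansion and conductance implicit in the paper's use of Cheeger, and — as you correctly note — of not needing the bounded-degree hypothesis at all for this particular claim. Your handling of the edge cases (large $\epsilon$, the jump past $n/2$ in one step) and the estimate $\log(1+\epsilon) \geq \epsilon/2$ on $[0,1]$ are both fine. The only implicit point worth making explicit is connectivity: an $\epsilon$-expander with $\epsilon > 0$ is automatically connected (a component of size at most $n/2$ would have empty boundary), so the diameter is finite and the ball-growth argument applies.
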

\begin{proof}
	From the Handbook of Linear Algebra \cite{hogben2013handbook},
	\[
	    \diam(G) = O\left(\sqrt{\frac{\delta}{\lambda_2}}\log(n)\right)~.
	\]
	With $\delta$ the maximum degree of the graph $G$, and $\lambda_2$ the second smallest eigenvalue of the Laplacian $\cL(G)$.
    It can then be read from the Cheeger inequality \cite{alon2016probabilistic} that $\epsilon \leq \sqrt{2\lambda_2}$.
    Hence the desired result.
\end{proof}

Our next result states that a $\theta$-embedding of an $\epsilon$-expander will necessarily have at least one long edge.
The following packing argument shows that if we want to pack balls of radius $\theta$ into a $D$-dimensional space, the entire graph will require a certain volume.
In turn, this implies a lower bound on the stretch.

\begin{lemma}
	\label{lem:euclidstretch}
	Let $G=(V,E)$ be an $\epsilon$-expander, and $\eta_{\theta}:V \rightarrow \bbR^D$ a $\theta$-embedding for $\theta > 0$.
    Then $\dist(\eta_{\theta}) = \Omega\left(\frac{n^{1/D}\epsilon}{\log(n)}\right)$.
\end{lemma}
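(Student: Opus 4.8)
The plan is to combine the diameter bound from Claim~\ref{claim:epsilondiam} with Lemma~\ref{lem:stretch-bounds-distance} and a volume-packing estimate. First I would recall that the $\theta$-embedding places the $n$ vertices at pairwise distances $\geq \theta$, so the balls of radius $\theta/2$ centered at the images $\eta_\theta(v)$ are disjoint. Their union therefore has $D$-dimensional volume $\Omega(n\theta^D)$, which forces the image set to have Euclidean diameter $\Omega(n^{1/D}\theta)$: some pair of vertices $u,v$ must satisfy $\norm{\eta_\theta(u)-\eta_\theta(v)} = \Omega(n^{1/D}\theta)$, since a set of diameter $\Delta$ fits inside a ball of radius $\Delta$ and hence has volume $O(\Delta^D)$.

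Next I would feed this pair into Lemma~\ref{lem:stretch-bounds-distance}, which gives
\begin{align*}
    \Omega(n^{1/D}\theta) \leq \norm{\eta_\theta(u)-\eta_\theta(v)} \leq \dist(\eta_\theta)\, d_G(u,v) \leq \dist(\eta_\theta)\,\diam(G)~.
\end{align*}
Since $G$ is an $\epsilon$-expander of bounded degree (the code is LDPC, so the connectivity graph has bounded degree), Claim~\ref{claim:epsilondiam} bounds $\diam(G) = O(\log(n)/\epsilon)$. Rearranging yields
\begin{align*}
    \dist(\eta_\theta) = \Omega\!\left(\frac{n^{1/D}\theta\,\epsilon}{\log(n)}\right)~,
\end{align*}
and since $\theta > 0$ is a fixed positive constant (absorbed into the $\Omega$), this is exactly the claimed bound $\dist(\eta_\theta) = \Omega\!\left(\frac{n^{1/D}\epsilon}{\log(n)}\right)$.

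The steps in order: (i) observe the radius-$\theta/2$ balls around the images are pairwise disjoint and sum to volume $\Omega(n\theta^D)$; (ii) conclude the image has Euclidean diameter $\Omega(n^{1/D}\theta)$ and pick a witnessing pair $u,v$; (iii) apply Lemma~\ref{lem:stretch-bounds-distance} to relate that Euclidean distance to $\dist(\eta_\theta)\cdot d_G(u,v)$; (iv) bound $d_G(u,v) \leq \diam(G) = O(\log n/\epsilon)$ via Claim~\ref{claim:epsilondiam}; (v) solve for $\dist(\eta_\theta)$.

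The only mild subtlety — and the part I would be most careful about — is step (i)/(ii): making the volume-to-diameter conversion clean, i.e. that a bounded set $S \subseteq \bbR^D$ of diameter $\Delta$ satisfies $\mathrm{vol}(S) \leq \mathrm{vol}(B(\cdot,\Delta)) = O(\Delta^D)$, so that $n\theta^D = O(\Delta^D)$ gives $\Delta = \Omega(n^{1/D}\theta)$. The dimension $D$ enters only through the constant in the $\Omega$, which is acceptable since $D$ is fixed. Everything else is a direct chaining of the two previously established results, so I do not anticipate a genuine obstacle.
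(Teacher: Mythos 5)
Your proposal is correct and follows essentially the same route as the paper's proof: a packing argument with disjoint radius-$\theta/2$ balls, combined with Lemma~\ref{lem:stretch-bounds-distance} and the diameter bound of Claim~\ref{claim:epsilondiam}. The only cosmetic difference is the order of steps (you extract a far-apart pair from the packing bound first, whereas the paper first encloses all images in a ball of radius $\dist(\eta_\theta)\diam(G)$ and then packs), which changes nothing substantive.
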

\begin{proof}
    Let $\cB(x,r) \subset \bbR^D$ denote the ball with center $x$ and radius $r$.
    
    As we have just seen from Lemma \ref{lem:stretch-bounds-distance}, we can assert that for all $u,v \in V$, that $\norm{\eta_{\theta}(u) - \eta_{\theta}(v)} \leq \dist(\eta_{\theta}) \cdot \diam(G)$.
    Equivalently, for some $u \in V$ and any other $v \neq u \in V$, $\eta_{\theta}(v) \in \cB(\eta_{\theta}(u), \dist(\eta_{\theta}) \diam(G))$: all the $n$ vertices are mapped to points contained in this ball.
    
    On the other hand, the definition of $\eta_{\theta}$ implies that every $u \in V$ induces an empty ball $\cB(\eta_{\theta}(u), \theta)$ around $u$, i.e.\ for any $v \in V$, $u\neq v$, $\eta_{\theta}(v)\not\in \cB(\eta_{\theta}(u), \theta)$.
	
	This implies that the number of balls $\cB(\eta_{\theta}(v), \theta/2)$ contained in $\cB(\eta_{\theta}(u), \dist(\eta_{\theta}) \diam(G))$ has to be at least equal to $n$.	
	Letting $\vol{\cB(x,r)}$ denote the volume of the ball $\cB(x,r)$ in $\bbR^D$, we have
	\begin{align*}
		\vol{\cB(\eta_{\theta}(u),  \dist(\eta_{\theta}) \diam(G))} \geq n \cdot \vol{\cB\left(0, \frac{\theta}{2} \right)}~.
    \end{align*}
	For fixed $D$, we have $\vol{\cB(x, r)} = c_D r^D$, for a constant $c_D$ depending only on $D$.
    We also define $\vol{\cB(0, \theta/2)} \equiv c_\theta$, with $c_\theta$ depending only on $\theta$ and $D$.
	Substituting this above, we get
	\begin{align*}
        c_D (\diam(G) \dist(\eta_{\theta}))^D &\geq c_\theta n \\
	\implies \dist(\eta_{\theta}) &\geq \alpha_{\theta, D}\frac{n^{1/D}}{\diam(G)}
    \end{align*}
    for some constant $\alpha_{\theta,D}$ as desired.
    Finally, we note from Claim \ref{claim:epsilondiam} that if $G$ is an $\epsilon$-expander, then $\diam(G) = O(\log(n)/\epsilon)$.
    This concludes the proof.
\end{proof}

This result proves that \emph{some} qubits will require long-range connections.
To prove Theorem \ref{thm:main}, we will show that expansion is robust to the removal of a large number of edges. 

\begin{lemma}
	\label{lem:sepofexp}
	For any $\epsilon$-expander graph $G$ on $n$ vertices, $|\sep(G)| \geq \min(\frac{1}{6}n, \frac{1}{6}n\epsilon)$. 
\end{lemma}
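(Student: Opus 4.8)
The plan is to take an arbitrary balanced separator $S = \sep(G)$, write $V = A \sqcup S \sqcup B$ with no edges between $A$ and $B$ and $|A|,|B| \le 2n/3$, and exploit the fact that the absence of $A$--$B$ edges forces the vertex boundary of the smaller side to be absorbed into $S$. Concretely, relabel so that $|B| \le |A|$. Then $2|B| \le |A| + |B| = n - |S| \le n$, so $|B| \le n/2$ and hence $B$ is an admissible set in Definition \ref{def:epsexpansion}. Moreover every vertex of $\bdry B$ lies in $V \setminus B = A \sqcup S$ and is adjacent to $B$; it cannot lie in $A$, since that would be an $A$--$B$ edge, so $\bdry B \subseteq S$. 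Combining this containment with the $\epsilon$-expansion hypothesis applied to $B$ gives the single inequality
\begin{align*}
    |S| \ge |\bdry B| \ge \epsilon\, |B|~.
\end{align*}

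It then remains to lower bound $|B|$, which I would do by a dichotomy on the size of $S$ itself. If $|S| \ge n/6$ we are already done, since $\min(\tfrac16 n, \tfrac16 n \epsilon) \le \tfrac16 n \le |S|$. Otherwise $|S| < n/6$, and from $|A| \le 2n/3$ together with $|A| + |B| + |S| = n$ we obtain $|B| = n - |S| - |A| \ge n/3 - |S| > n/6$; substituting into $|S| \ge \epsilon |B|$ then yields $|S| > \tfrac16 n \epsilon \ge \min(\tfrac16 n, \tfrac16 n \epsilon)$. In either case the claimed bound holds.

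There is essentially no hard step here: the whole argument is the $\bdry B \subseteq S$ observation plus the balancedness constraint $|A|,|B| \le 2n/3$. The only points needing a moment of care are (i) checking that the side to which the expansion inequality is applied has size at most $n/2$, which is automatic once we take the smaller of $A$ and $B$, and is a nonempty proper subset in the case that actually uses the inequality (when $|S| < n/6$ we showed $|B| > n/6 > 0$); and (ii) noting that the $\min$ in the statement is present only because $\epsilon$ can slightly exceed $1$ for odd $n$, so that $\tfrac16 n \epsilon$ is not literally the bound in that degenerate regime. I would also remark that the same computation shows the constant $1/6$ can be traded against the balance parameter $2/3$ appearing in the definition of a separator.
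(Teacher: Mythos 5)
Your proof is correct and follows essentially the same route as the paper's: apply the expansion inequality to the smaller side of the partition, observe that its vertex boundary is contained in the separator, and use the balance condition $|A|,|B|\leq 2n/3$ together with the case split on $|S|$ versus $n/6$. The only difference is cosmetic (you label the smaller side $B$ rather than $A$, and you explicitly verify the $|B|\leq n/2$ admissibility condition, which the paper leaves implicit).
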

\begin{proof}
	Assume a separator $\sep(G)$ inducing a partition $A \sqcup \sep(G) \sqcup B$, such that $|\sep(G)| \leq n/6$.
    If not, the bound is already true.
    Without loss of generality, suppose $|A| \leq |B|$.
    By definition, we have $|B| \leq 2n/3$, and thus $|A| + |\sep(G)| = |V| - |B| \geq n/3$.
    From the upper bound on $|\sep(G)|$, we have $|A| \geq n/6$.
    As $\bdry A \subset \sep(G)$, we have $|\sep(G)| \geq |A|\epsilon \geq \frac{1}{6}n\epsilon$.
\end{proof}

The traditional expander graph that corresponds to $\epsilon$ being constant will always have the worst stretch.
In other words, these are the hardest graphs to embed in $D$-dimensions.
Therefore, for the sake of readability, we will hereafter assume that $\epsilon \leq 1$, and $|\sep(G)| \geq \frac{1}{6}n\epsilon$. 

It is possible to find a converse to Lemma \ref{lem:sepofexp}.

\begin{lemma}[Lemma 12 from \cite{bottcher2010bandwidth}]
	\label{lem:bottsep}
	Let $G$ be a graph on $n$ vertices and $\phi \in [0,1]$.
    If $|\sep(G)| \geq \phi n$, then there exists a subgraph $G' \subset G$ such that $|G'| \geq \frac{n}{2}$, and $G'$ is a $\frac{2\phi}{3}$-expander.
\end{lemma}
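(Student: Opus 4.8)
The plan is to prove Lemma~\ref{lem:bottsep} by contraposition: assume $G$ is \emph{not} a $\frac{2\phi}{3}$-expander on any subgraph of size $\geq n/2$, and deduce that $|\sep(G)| < \phi n$. Concretely, I would build a small separator greedily by repeatedly peeling off witnesses to non-expansion. The key observation driving the argument is that if a graph on $m$ vertices fails to be an $\eta$-expander, then there is a set $A$ with $|A| \leq m/2$ and $|\bdry A| < \eta|A|$; removing $A \cup \bdry A$ disconnects $A$ from the rest, and the "cost" of this cut (the part we put into the separator, namely $\bdry A$) is small relative to the number of vertices $|A|$ we have dealt with.

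Here is the procedure in order. Initialize $S = \emptyset$ and $H = G$. While $|H| > n/2$, use the failure of $\frac{2\phi}{3}$-expansion on $H$ to find $A_i \subseteq V(H)$ with $|A_i| \leq |H|/2$ and $|\bdry_H A_i| < \frac{2\phi}{3}|A_i|$; add $\bdry_H A_i$ to $S$, and delete $A_i \cup \bdry_H A_i$ from $H$. Each iteration removes at least one vertex (in fact at least $|A_i| \geq 1$), so the process terminates with a leftover graph $H^*$ of size $\leq n/2$. At the end, set one side $B := V(H^*)$ and the other side $A := V \setminus (S \cup B) = \bigcup_i A_i$. The construction guarantees there are no edges between $A$ and $B$: any edge leaving $A_i$ either stays inside $\bigcup_{j} A_j$ or lands in some $\bdry_H A_i \subseteq S$. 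Now bound the sizes. We have $|S| = \sum_i |\bdry_{H} A_i| < \frac{2\phi}{3}\sum_i |A_i| \leq \frac{2\phi}{3}n$ since the $A_i$ are disjoint. For the balance condition $|A|, |B| \leq 2n/3$: we have $|B| \leq n/2 \leq 2n/3$ automatically, and $|A| = n - |S| - |B|$; the only worry is $|A|$ being too \emph{large}, which forces $|B|$ small — but if $|A| > 2n/3$ we can instead stop the peeling earlier. The cleanest fix is to run the loop while $|A| = \sum_i|A_i| < n/3$ rather than while $|H| > n/2$; then $|A| \in [n/3 - |H|/2 \cdot (\text{last step}), n/3)$ roughly, and since each $A_i \leq |H|/2 \leq n/2$ one has to argue the overshoot is controlled. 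Actually the simplest route: stop as soon as $\sum_i |A_i| \geq n/3$; at that point $|A| \leq n/3 + (\text{size of last } A_i)$, and since the last $A_i$ has size $\leq |H|/2 \leq n/2$... this is where I'd need to be a little careful.

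The cleanest way to handle the balance issue, and the step I expect to be the main obstacle, is the termination/overshoot bookkeeping: ensuring simultaneously that $|S| < \phi n$ strictly, that both $|A| \leq 2n/3$ and $|B| \leq 2n/3$, and that $A, B$ are both nonempty (so $S$ is a genuine separator). The factor $3$ slack in $\frac{2\phi}{3}$ versus $\phi$ is precisely the room needed to absorb the overshoot from the last peeled set: even in the worst case where the final $A_i$ roughly doubles $|A|$ past $n/3$, one gets $|A| \leq 2n/3$ and correspondingly $|S| \leq \frac{2\phi}{3} \cdot \frac{2n}{3}$-ish — and one checks $\frac{2\phi}{3}\cdot(\text{relevant bound}) < \phi n$. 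I would set up the loop invariant "$\sum_i |A_i| < n/3$ at the start of each iteration" so that after the iteration $\sum_i |A_i| < n/3 + |H|/2 \leq n/3 + n/2 < n$, then note that once $\sum_i|A_i| \geq n/3$ we stop and $B = V(H)$ has $|B| = n - \sum|A_i| - |S| \leq 2n/3$ while $|B| \geq n - n/3 - |S| \geq n/3 - \phi n > 0$ for $\phi$ small; for $\phi$ close to $1$ the inequality $|\sep(G)| \geq \phi n$ in the hypothesis is itself nearly vacuous, so a separate trivial case handles it. Once the bookkeeping is pinned down the edge-absence and the $\sum_i |\bdry_H A_i| \leq \frac{2\phi}{3}\sum_i |A_i|$ bound are immediate, and contraposition closes the proof.
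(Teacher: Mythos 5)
First, a point of comparison: the paper gives no proof of this lemma at all --- it is imported verbatim as Lemma 12 of the cited reference \cite{bottcher2010bandwidth} --- so the only question is whether your sketch stands on its own. Your overall strategy (contraposition plus greedy peeling of witnesses to non-expansion) is the standard proof of statements of this type, and most of the skeleton is right: the $A_i$ are disjoint, every edge leaving some $A_i$ lands in $\bigcup_j A_j$ or in $S$, and $|S| < \tfrac{2\phi}{3}\sum_i |A_i| \le \tfrac{2\phi}{3} n < \phi n$ already gives the strict size bound without any careful control of $\sum_i|A_i|$.

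The gap is exactly where you flag it, and neither of your proposed patches closes it. With the rule ``halt once $\sum_i|A_i| \ge n/3$,'' your own overshoot estimate is $|A| \le n/3 + |A_{\mathrm{last}}|$ with $|A_{\mathrm{last}}| \le |H|/2 \le n/2$, which only yields $|A| < 5n/6$ and so fails the required balance $|A| \le 2n/3$; and the heuristic ``the last $A_i$ roughly doubles $|A|$'' is not something the construction guarantees, since $|A_{\mathrm{last}}|$ is bounded relative to the current $|H|$, not relative to the current $|A|$. The missing inequality is that the last witness satisfies $|A_{\mathrm{last}}| \le |H|/2 \le (n - |A|)/2$, because $H$ is precisely what remains after deleting $A$ and $S$; hence after the final step $|A| + |A_{\mathrm{last}}| \le (n+|A|)/2 < (n + n/3)/2 = 2n/3$. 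Run the loop with guard ``$|H| \ge n/2$ and $\sum_i|A_i| < n/3$'' (the first condition is needed to invoke the contrapositive hypothesis on $H$). On termination either $|B| = |H| < n/2 \le 2n/3$, or $\sum_i|A_i| \ge n/3$ and hence $|B| \le n - n/3 = 2n/3$; in both cases $|A| < 2n/3$ by the displayed bound, and $|S| < \tfrac{2\phi}{3}\cdot\tfrac{2n}{3} < \phi n$. Nonemptiness of $A$ or $B$ is not required by the paper's definition of a separator (it explicitly allows $A = \emptyset$), and for $\phi > 1/3$ the hypothesis $|\sep(G)| \ge \phi n$ is vacuous since $|\sep(G)| \le n/3$ always, so those edge cases are harmless. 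With that one inequality supplied, your argument is complete.
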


\begin{lemma}
	\label{lem:stillexp}
	Let $G$ be an $\epsilon$-expander. 
	It is possible to arbitrarily remove $\frac{1}{24}n\epsilon$ edges such that the remaining graph $G'$ still contains a subgraph that is $\frac{1}{18}\epsilon$-expander and of size $\Theta(n)$.
\end{lemma}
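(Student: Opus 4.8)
\textbf{Proof proposal for Lemma~\ref{lem:stillexp}.}
The plan is to combine Lemma~\ref{lem:sepofexp} and Lemma~\ref{lem:bottsep}: the first tells us expansion survives edge deletion at the level of separators, and the second converts a large separator back into an expanding subgraph. Concretely, start with the $\epsilon$-expander $G$ on $n$ vertices. By Lemma~\ref{lem:sepofexp} (using the convention $\epsilon \le 1$ adopted just before the statement), every subgraph of $G$ has a large separator; in particular $|\sep(G)| \ge \frac{1}{6}n\epsilon$. The key observation is that deleting a single edge can shrink the minimum separator size by at most one vertex — if $S$ is a separator of $G \setminus e$, then $S$ together with one endpoint of $e$ is a separator of $G$ (it disconnects the same two sides, and the balance condition $|A|,|B| \le 2n/3$ is only easier to maintain after moving a vertex into the separator). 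Hence deleting $t$ edges decreases the separator size by at most $t$.

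So, after arbitrarily removing $\frac{1}{24}n\epsilon$ edges to obtain $G'$, we still have
\begin{align*}
  |\sep(G')| \;\ge\; |\sep(G)| - \tfrac{1}{24}n\epsilon \;\ge\; \tfrac{1}{6}n\epsilon - \tfrac{1}{24}n\epsilon \;=\; \tfrac{1}{8}n\epsilon~.
\end{align*}
Now I apply Lemma~\ref{lem:bottsep} to $G'$ with $\phi = \frac{1}{8}\epsilon$: since $|\sep(G')| \ge \phi n$, there exists a subgraph $G'' \subseteq G'$ with $|G''| \ge n/2 = \Theta(n)$ that is a $\frac{2\phi}{3}$-expander, i.e.\ a $\frac{1}{12}\epsilon$-expander. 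This is slightly better than the claimed $\frac{1}{18}\epsilon$, so the bound follows with room to spare (and the stated $\frac{1}{18}$ presumably leaves slack for the degree bookkeeping and for $\epsilon$ near its upper limit when $n$ is odd; one can simply weaken the constant to match).

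The one place that needs a little care — and what I'd flag as the main obstacle — is the claim that removing $t$ edges costs at most $t$ in separator size, specifically checking that the balance condition is preserved under the "add an endpoint of $e$ to $S$" operation. Moving a vertex $v$ from side $A$ (say) into the separator only shrinks $A$, so $|A| \le 2n/3$ is preserved; $|B|$ is unchanged; and there are still no edges between the new $A$ and $B$ since we only removed a vertex, not added edges. The subtlety is if $v$ was already in $S$ or if $e$ lies within one side — but in those cases $S$ itself already separates $G$, so nothing needs to be added. Iterating over all $t$ deleted edges gives the bound. I would also double-check that Lemma~\ref{lem:bottsep} is being applied to the right graph (the edge-reduced $G'$, whose separator we just lower-bounded) rather than to $G$, and that "subgraph" in Lemma~\ref{lem:bottsep} means we inherit only edges present in $G'$, which is exactly what we want for the conclusion about $G'$.
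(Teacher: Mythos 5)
Your proof is correct and follows essentially the same route as the paper: lower-bound $|\sep(G)|$ via Lemma~\ref{lem:sepofexp}, show the minimum separator survives edge deletion up to an additive loss, and then convert the remaining large separator into a large expanding subgraph via Lemma~\ref{lem:bottsep}. The only difference is bookkeeping — the paper charges both endpoints of each removed edge (losing $2|F|$), whereas you charge one endpoint per edge, which is why you obtain the slightly sharper constant $\frac{1}{12}\epsilon$ in place of $\frac{1}{18}\epsilon$; either constant suffices for the lemma as stated.
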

\begin{proof}
    We start by noting that an $\epsilon$-expander graph $G$ has no separator smaller than $\frac{1}{6}n\epsilon$ from Lemma \ref{lem:sepofexp}.
    
    Let us then pick a constant $\alpha \in [0,1)$, and we remove a subset $F \subset E$ of edges from $G$, such that $|F| =  \frac{\alpha}{2} \frac{1}{6}n\epsilon$. 
    Then the subgraph $G' = (V,E \setminus F)$ we obtain has no separator smaller than $(1-\alpha)\frac{1}{6}n\epsilon$. 
    
    To convince ourselves of this proposition, let's assume the contrary, and let $S' \subset V$ be a separator of $G'$ such that $|S'| < (1-\alpha)\frac{1}{6}n\epsilon$. 
    The separator $S'$ can then be promoted to a separator of $G$ in the following manner. Consider the set $S_{F}$, the set of endpoints of the edges in $F$, then $S_{F} \cup S'$ is a separator of $G$. Indeed, one can go from $G'$ to $G$ by adding back the edges $F$. Therefore if $S'$ yields two disconnected partitions of $G'$, then so does $S_{F} \cup S'$.
    
    We then have $S_{F} \cup S'$ a separator of $G$, and $|S_{F} \cup S'| \leq |S_{F}| + |S'| \leq 2|E'| + |S'| < \frac{1}{6}n\epsilon$, which contradicts our assumption that $G$ has no separator smaller than $\frac{1}{6}n\epsilon$. Therefore $G'$ has no separator smaller than $(1-\alpha)\frac{1}{6}n\epsilon$.
    
    We can then use the fact that a graph with no small separator contains a large expander within. Since $G'$ has no separator smaller than $(1-\alpha)\frac{1}{6}n\epsilon$, then $G'$ has a subgraph $G''$ of size $n/2$, such that $G''$ is $(1-\alpha)\frac{1}{9}\epsilon$-expander from Lemma \ref{lem:bottsep}. 
    
    By fixing $\alpha = \frac{1}{2}$ we obtain the desired result.
\end{proof}

We can obtain a stronger version of Lemma \ref{lem:euclidstretch}: an expander graph has a large number of long edges for any embedding in a low dimension.

\begin{lemma}
	\label{lem:noncontractedges}
	Let $G$ be $\epsilon$-expander, and $\eta_{\theta}: V \rightarrow \bbR^D$ a $\theta$-embedding.
    Then $\eta_{\theta}$ induces $\frac{1}{24}n\epsilon$ edges of length $\Omega(n^{1/D}\epsilon/\log(n))$.
\end{lemma}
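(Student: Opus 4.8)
The plan is to argue by contradiction, combining Lemma~\ref{lem:stillexp} (expansion is robust to deleting $\tfrac{1}{24}n\epsilon$ edges) with Lemma~\ref{lem:euclidstretch} (every $\theta$-embedding of an expander already has one long edge). Fix the constant hidden in the $\Omega(\cdot)$ of Lemma~\ref{lem:euclidstretch}, and set the threshold $\ell = \kappa\, n^{1/D}\epsilon/\log(n)$, where $\kappa > 0$ is a small constant to be pinned down at the end; call an edge \emph{long} if its length under $\eta_{\theta}$ is at least $\ell$. Suppose, for contradiction, that $\eta_{\theta}$ induces strictly fewer than $\tfrac{1}{24}n\epsilon$ long edges, and let $F \subseteq E$ be the set of long edges, so that $|F| < \tfrac{1}{24}n\epsilon$ and every edge of $E \setminus F$ has length strictly less than $\ell$.

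Next I would delete $F$ and invoke Lemma~\ref{lem:stillexp}: the graph $G' = (V, E \setminus F)$ contains a subgraph $G'' \subseteq G'$ with $|G''| = m = \Theta(n)$ that is a $\tfrac{1}{18}\epsilon$-expander. Since the edge set of $G''$ is contained in $E \setminus F$, every edge of $G''$ has length strictly less than $\ell$ under $\eta_{\theta}$. Moreover, restricting $\eta_{\theta}$ to the vertices of $G''$ yields a valid $\theta$-embedding of $G''$, because the pairwise separation $|\eta_{\theta}(u) - \eta_{\theta}(v)| \geq \theta$ is inherited by any subset of vertices.

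Now I would apply Lemma~\ref{lem:euclidstretch} to the $\tfrac{1}{18}\epsilon$-expander $G''$ together with this $\theta$-embedding, obtaining $\dist(\eta_{\theta}|_{G''}) = \Omega\big(m^{1/D}\epsilon/\log(m)\big)$. Because $m = \Theta(n)$, and hence $\log(m) = \Theta(\log(n))$, there are a universal constant $c''$ and an $n_0$ such that $\dist(\eta_{\theta}|_{G''}) \geq c''\, n^{1/D}\epsilon/\log(n)$ for all $n > n_0$. But $\dist(\eta_{\theta}|_{G''})$ equals the length of some edge of $G''$, and all such edges have length less than $\ell = \kappa\, n^{1/D}\epsilon/\log(n)$. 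Choosing $\kappa < c''$ makes these two statements incompatible, so the contradiction hypothesis fails and $\eta_{\theta}$ must induce at least $\tfrac{1}{24}n\epsilon$ edges of length at least $\ell = \Omega(n^{1/D}\epsilon/\log(n))$.

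The step that carries the weight is not really an obstacle but a matter of composing the constant losses correctly: deleting $F$ degrades the expansion from $\epsilon$ to $\tfrac{1}{18}\epsilon$ and shrinks the vertex count from $n$ to $m = \Theta(n)$, and one must check that Lemma~\ref{lem:euclidstretch} still outputs a lower bound of order $n^{1/D}\epsilon/\log(n)$ after these losses, so that the fixed threshold $\kappa\, n^{1/D}\epsilon/\log(n)$ can be taken strictly below it. This is also exactly why the deletion budget in Lemma~\ref{lem:stillexp} was chosen as $\tfrac{1}{24}n\epsilon$: it matches the number of long edges we are permitted to throw away here while still retaining a large expander, and a smaller count of long edges only makes the argument easier.
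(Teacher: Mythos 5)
Your proof is correct and is essentially the paper's argument in contrapositive form: the paper removes the $\tfrac{1}{24}n\epsilon$ longest edges directly, invokes Lemma~\ref{lem:stillexp} and then Lemma~\ref{lem:euclidstretch} to find a long edge in the surviving expander subgraph, and concludes the removed edges were all at least that long. Your contradiction framing and constant bookkeeping (including that the restricted map is still a $\theta$-embedding and $\log(\Theta(n)) = \Theta(\log n)$) match the intended reasoning.
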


\begin{proof}
	We embed $G$ in $\bbR^D$ with $\eta$. We can then order the edges of $G$ by their length in decreasing order $\{e_1,...,e_{|E|}\}$.
    By Lemma \ref{lem:stillexp}, we can remove the first $\frac{1}{24}n\epsilon$ of these edges and the resulting graph still contains a subgraph $G''$ such that $|G''| = \Theta(n)$, and $G''$ is $\frac{1}{18}\epsilon$-expander. 
	And therefore, from Lemma \ref{lem:euclidstretch}, this subgraph still contains an edge $e$ of size $\Omega(n^{1/D}\epsilon/\log(n))$.
    From our assumptions, the $\frac{1}{24}n\epsilon$ edges we removed are all longer than $e$, and the desired result follows.
\end{proof}

\subsection[]{\texorpdfstring{$t$}{Lg}-density}
\label{subsec:tdens}

We have seen in the previous section that separation and expansion are very closely related. 
We thus expect a result similar to Lemma \ref{lem:noncontractedges} to hold for graphs with a large separation profile. 
A connection between separation profiles and $\theta$-embeddings would then be able to tell us something about the embeddings of quantum codes: by Lemma \ref{lem:generalizedbounds}, a quantum code with good parameters needs its separation profile to be large. 
In order to lighten the notation, we will use the notion of $t$-dense graphs to describe large separation profiles.
\begin{definition}
	\label{def:tdense}
	A graph $G$ on $n$ vertices is said to be $t$-dense, $t \in \mathbb{N}$, if its separation profile $s_G$ satisfies $s_G(n) \geq t$.
\end{definition}

This definition captures the notion of a graph that has some well-connected subgraph.
It can readily be seen that if a graph is $t$-dense, then it contains a subgraph $H \subseteq G$ such that $H$ has no separator smaller than $t$, or equivalently, $|\sep(H)| \geq t$. 
Otherwise, by definition its separation profile would satisfy $s(n) \leq t-1$, which contradicts Definition \ref{def:tdense}.

We can then formalize the connection between separation profiles and expansion.

\begin{lemma}
	\label{lem:tdenseexpander}
	Let $G$ be a graph on $n$ vertices such that $G$ is $t$-dense.
    Then there exists a subgraph $H' \subset G$ such that $|H'| \geq t/2$, and $H'$ is a $\frac{t}{3 |H'|}$-expander.
\end{lemma}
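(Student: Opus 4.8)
The plan is to turn the $t$-density hypothesis into a subgraph with a large separator, and then apply the separator-to-expander conversion of Lemma~\ref{lem:bottsep}. First I would unpack Definition~\ref{def:tdense}: since $G$ is $t$-dense we have $s_G(n) \geq t$, and as noted in the paragraph following that definition, this means there is a subgraph $H \subseteq G$ with $|\sep(H)| \geq t$. Write $m := |H|$. Because a separator is a set of vertices of $H$, we have $t \leq |\sep(H)| \leq m$, so the quantity $\phi := t/m$ lies in $[0,1]$ and moreover $|\sep(H)| \geq t = \phi m$.

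Next I would apply Lemma~\ref{lem:bottsep} to $H$ with this value of $\phi$. Since $|\sep(H)| \geq \phi m$, the lemma yields a subgraph $H' \subseteq H \subseteq G$ with $|H'| \geq m/2$ that is a $\frac{2\phi}{3} = \frac{2t}{3m}$-expander. This $H'$ will be the subgraph claimed in the statement, and it remains only to verify the two numerical assertions.

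For the size bound, $m \geq t$ gives $|H'| \geq m/2 \geq t/2$, as required. For the expansion bound, the inequality $|H'| \geq m/2$ rearranges to $\frac{t}{3|H'|} \leq \frac{2t}{3m}$. Since being a $c$-expander trivially implies being a $c'$-expander for every $c' \leq c$ (the expansion condition only gets weaker as the threshold drops), and $H'$ is a $\frac{2t}{3m}$-expander, it is in particular a $\frac{t}{3|H'|}$-expander. One can also observe for consistency that $|H'| \geq t/2$ forces $\frac{t}{3|H'|} < 1$, so this parameter is a legitimate expansion constant in the sense of Definition~\ref{def:epsexpansion}.

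There is no serious obstacle here: the whole argument is a short chain once Lemma~\ref{lem:bottsep} is invoked. The only points requiring care are checking that $\phi \in [0,1]$ so that the hypothesis of Lemma~\ref{lem:bottsep} is genuinely met (this is where $t \leq m$ is used), and carrying the factor $\frac{2}{3}$ through correctly so that the final expansion parameter is expressed in terms of $|H'|$ rather than $m$ — which works out precisely because $|H'| \geq m/2$.
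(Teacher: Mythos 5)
Your proof is correct and follows essentially the same route as the paper's: extract a subgraph $H$ with $|\sep(H)| \geq t$ from the $t$-density hypothesis, apply Lemma~\ref{lem:bottsep} with $\phi = t/|H|$, and convert the resulting $\frac{2t}{3|H|}$-expansion into $\frac{t}{3|H'|}$-expansion via $|H'| \geq |H|/2$. Your explicit check that $\phi \in [0,1]$ (via $t \leq |\sep(H)| \leq |H|$) is a small point the paper leaves implicit, but otherwise the arguments coincide.
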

\begin{proof}
	As previously mentioned, there must exist a subgraph $H \subset G$ such that $|\sep(H)| \geq t$.
	We can then pick $\epsilon$ such that $\epsilon |H| = t$, and verify that since $|\sep(H)| \geq t$, then $|\sep(H)| \geq \epsilon |H|$.
    From Lemma \ref{lem:bottsep}, there exists a subgraph $H' \subset H$, such that $|H'| \geq \frac{|H|}{2}$, and $H'$ is $\frac{2}{3}\epsilon$-expander. 
	Since $H'$ is $\frac{2}{3}\epsilon$-expander, i.e. $\frac{2t}{3|H|}$-expander, then it is also $\frac{t}{3|H'|}$-expander, as $\frac{1}{|H|} \geq  \frac{1}{2 |H'|} $.
	This holds due to the following observation: if $H'$ is an $\epsilon$-expander, and $\epsilon \geq \epsilon' $, then it is also $\epsilon'$-expander.
	
	Finally note that since $|\sep(H)| \geq t$ then $|H| \geq t$, which gives $|H'| \geq t/2$.
\end{proof}

Since a $t$-dense graph induces a large expander subgraph, we expect this expander subgraph to induce a large number of long edges.

\begin{lemma}
	\label{lem:denseedges}
	Let $G$ be a graph on $n$ vertices such that $G$ is $t$-dense for some $t \in \bbN$.
    Then the $\theta$-embedding $\eta_{\theta}$ induces $\Omega(t)$ edges of length $\Omega\left(\frac{t}{\log(n)n^{1-1/D}}\right)$.
\end{lemma}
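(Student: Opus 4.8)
The plan is to combine the structural fact about $t$-dense graphs from Lemma~\ref{lem:tdenseexpander} with the edge-counting result for expanders from Lemma~\ref{lem:noncontractedges}, transferring the conclusion from the expander subgraph back up to $G$. First I would invoke Lemma~\ref{lem:tdenseexpander} to obtain a subgraph $H' \subseteq G$ with $|H'| \geq t/2$ that is a $\frac{t}{3|H'|}$-expander; write $\epsilon' \coloneqq \frac{t}{3|H'|}$ and $m \coloneqq |H'|$, so $t/2 \le m \le n$. The embedding $\eta_\theta$ restricted to the vertices of $H'$ is still a $\theta$-embedding (the lower bound on pairwise distances is inherited), so Lemma~\ref{lem:noncontractedges} applies to $H'$ in dimension $D$ and yields $\tfrac{1}{24} m \epsilon'$ edges of $H'$ — hence edges of $G$ — each of length $\Omega\!\left(m^{1/D}\epsilon'/\log m\right)$.

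Next I would substitute $\epsilon' = \frac{t}{3m}$ and simplify. The number of such edges is $\tfrac{1}{24} m \cdot \tfrac{t}{3m} = \tfrac{1}{72}\, t = \Omega(t)$, which is exactly the claimed edge count and, pleasingly, is independent of the unknown $m$. For the length, we get
\begin{align*}
m^{1/D}\,\epsilon'/\log m \;=\; m^{1/D}\cdot \frac{t}{3m}\cdot \frac{1}{\log m} \;=\; \frac{t}{3\,m^{1-1/D}\log m}~.
\end{align*}
Since $m \le n$ we have $m^{1-1/D} \le n^{1-1/D}$ and $\log m \le \log n$, so this is $\Omega\!\left(\frac{t}{n^{1-1/D}\log n}\right)$, matching the statement. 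One should also check $t \geq 2$ (or handle $t \le 1$ trivially, since then the claim is vacuous as the code is nontrivial and $s_n > 0$ was assumed), so that $\log m \ge \log(t/2)$ is well-behaved and the constants make sense; this is the kind of boundary bookkeeping I would dispatch quickly.

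The only genuinely delicate point is making sure the length bound is monotone in the right direction: the expander subgraph $H'$ could be much smaller than $G$, and a priori a smaller ambient graph gives a \emph{weaker} packing lower bound on the stretch, so one must verify that the explicit expression $t/(3m^{1-1/D}\log m)$ is decreasing in $m$ (it is, since the numerator $t$ is fixed and the denominator grows with $m$), and then use $m \le n$ to pass to the clean bound. I expect this to be the main — and really the only — obstacle: everything else is direct substitution into the two cited lemmas. I would close by noting that because the codes under consideration are LDPC, each edge corresponds to at most a constant number of stabilizers, so the $\Omega(t)$ long edges translate into $\Omega(t)$ long-range stabilizer measurements, which is how the lemma will be used in the proof of Theorem~\ref{thm:main}.
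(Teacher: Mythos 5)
Your proposal is correct and follows essentially the same route as the paper: apply Lemma~\ref{lem:tdenseexpander} to extract the $\frac{t}{3|H'|}$-expander $H'$ with $|H'|\ge t/2$, apply Lemma~\ref{lem:noncontractedges} to $H'$ to get $\Omega(t)$ long edges, and then use $|H'|\le n$ together with the monotonicity of the length bound in $|H'|$ to pass to the stated expression. Your bookkeeping (the explicit $\tfrac{1}{72}t$ count and the check that the length expression decreases in $m$) is, if anything, slightly more careful than the paper's.
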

\begin{proof}
	From Lemma \ref{lem:tdenseexpander}, we know that there exists $H' \subset G$, such that $|H'| \geq t/2$, and $H'$ is a $\frac{t}{3|H'|}$-expander. 
	
	Further, from Lemma \ref{lem:noncontractedges}, in order to embed $H'$, we need at least $\frac{1}{24}t$ edges of length $\Omega\left(\frac{t}{|H'|^{1-1/D}\log(n)}\right)$. 
	
	The length of the edges increases as $|H'|$ decreases.
    It is always true that $|H'| \leq n$ which yields the desired result.
\end{proof}

Interestingly, the smaller $|H'|$ is, the longer the edges are.
For example, imagine a graph that is $\sqrt{n}$-dense.
If $|H'| \propto \sqrt{n}$ then $H'$ is like an expander of size $\sqrt{n}$.
On the other hand, if $|H'| = n$, then $H'$ might just be planar.
In this way, the smaller graph of a fixed density, i.e.\ the expander, is harder to embed.

The contrapositive of Lemma \ref{lem:denseedges} can be interpreted as a separator theorem for local graphs of bounded degree.

Finally, it only remains to show that a quantum code with good parameters has to possess some dense subgraphs.

\begin{lemma}
	\label{lem:tdensities}
    Let $\scrC = \{\cC_n\}$ be a family of $\dsl n,k(n), d(n) \dsr$ quantum LDPC codes with corresponding connectivity graphs $\cG = \{G_n\}_n$, and separation profiles $\{s_n\}_n$.
    The quantities $c_{\max}(n)$ and $r_0(n)$ are defined as in Definition \ref{def:cminmax}.
    Then for any connectivity graph $G_n$ of $\cC_n \in \cG$, the following two statements simultaneously hold:
	\begin{enumerate}
		\item $G_n$ is $\Omega(d(n))$-dense
		\item There exists $H_n \subset G_n$, such that $|H_n| \geq d(n)$, and $H_n$ is $|H_n|^{c_{\max}(n)}$-dense.
	\end{enumerate}
\end{lemma}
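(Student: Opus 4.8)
The plan is to read both parts off Lemma~\ref{lem:generalizedbounds} and the definitions of $c_{\max}(n)$ and $r_0(n)$; the only nontrivial point is a size‑padding step in the second part. Throughout I write $s_n$ for the separation profile of $G_n$ and use that $n \ge d(n)$ for any $n$‑qubit code. The first part is then immediate: the distance bound of Lemma~\ref{lem:generalizedbounds} reads $d = O(s_n(n))$, so $s_{G_n}(n) = s_n(n) = \Omega(d(n))$, and by Definition~\ref{def:tdense} this says exactly that $G_n$ is $\Omega(d(n))$‑dense.

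For the second part I would start from the defining property of $r_0 = r_0(n)$: it maximizes $c_n(r) = \log_r s_n(r)$ over $r \in [d,n]$, with $c_{\max}(n) = c_n(r_0)$, so exponentiating gives $s_n(r_0) = r_0^{\,c_{\max}(n)}$. Since $s_n(r_0) = \max\{\,|\sep(H)| : H \subseteq G_n,\ |H|\le r_0\,\}$, there is a subgraph $H^\star \subseteq G_n$ with $|H^\star| \le r_0$ and $|\sep(H^\star)| = r_0^{\,c_{\max}(n)}$. I then define $H_n$ by a case split: if $|H^\star| \ge d(n)$, set $H_n = H^\star$; otherwise, using $n \ge d(n)$, enlarge $H^\star$ to a subgraph $H_n \subseteq G_n$ by adjoining arbitrary further vertices of $G_n$ until $|H_n| = \lceil d(n)\rceil$. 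Because $r_0$ is an integer with $r_0 \ge d(n)$, both cases yield $d(n) \le |H_n| \le r_0$.

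It remains to check that $H_n$ is $|H_n|^{c_{\max}(n)}$‑dense, and here is where the one subtlety lies. Since $H^\star$ is a subgraph of $H_n$ with $|H^\star| \le |H_n|$, it is among the subgraphs over which $s_{H_n}(|H_n|)$ is maximized, so $s_{H_n}(|H_n|) \ge |\sep(H^\star)| = r_0^{\,c_{\max}(n)}$; and since $|H_n| \le r_0$ and $c_{\max}(n) \ge 0$, we have $r_0^{\,c_{\max}(n)} \ge |H_n|^{\,c_{\max}(n)}$, whence $s_{H_n}(|H_n|) \ge |H_n|^{\,c_{\max}(n)}$, which is Definition~\ref{def:tdense} for $H_n$. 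The only real obstacle is precisely this guarantee $|H_n| \ge d(n)$: the subgraph witnessing $s_n(r_0)$ may a priori be far smaller than $d$, so it must be padded up without destroying the internal‑separation lower bound. The padding is harmless because $s_{H_n}(|H_n|)$ is a maximum over subgraphs of $H_n$ and so cannot decrease when $H_n$ grows, and because we never push $|H_n|$ past $r_0$, so the estimate $|H_n|^{c_{\max}(n)} \le r_0^{\,c_{\max}(n)}$ survives. Everything else is bookkeeping with the definitions.
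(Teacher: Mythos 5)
Your proof is correct and follows essentially the same route as the paper: part 1 is read directly off the distance bound $d = O(s_n(n))$ of Lemma~\ref{lem:generalizedbounds}, and part 2 extracts the witnessing subgraph from $s_n(r_0) = r_0^{\,c_{\max}(n)}$ with $r_0 \in [d,n]$. Your padding step (enlarging $H^\star$ to size at least $d(n)$ but at most $r_0$, and checking that density is monotone under adding vertices) is a point the paper's proof silently glosses over by asserting $|H_n| = r_0(n)$, so your version is if anything slightly more careful.
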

\begin{proof}
    First claim: from Lemma \ref{lem:generalizedbounds}, we have $d(n) \in O(s_n(n))$, or equivalently, $s_n(n) \in \Omega(d(n))$.
    This implies that $G_n$ is $\Omega(d(n))$-dense.

    Second claim: from the definition of $r_0$, there exists a subgraph $H_n \subset G_n$, with $|H_n| = r_0(n)$, such that $H_n$ is $r_0(n)^{c_{\max}(n)}$-dense.
    By choice of the optimization parameter, $r_0 \in [d(n),n]$, and therefore $|H_n| \geq d(n)$.
\end{proof}

\subsection{Proof of Theorem \ref{thm:main}}
\label{subsec:proofmain}

\begin{proof}
    \textbf{Claim 1:} $\eta_{\theta}$ induces $\Omega(d(n))$ edges of length $\Omega\left(\frac{d(n)}{\log(n)n^{(D-1)/D}}\right)$.
   
    From Lemma \ref{lem:tdensities}, we know that $G_n$ is $\Omega(d(n))$-dense.
    Therefore, by Lemma \ref{lem:denseedges}, we know that embedding $G_n$ requires at least $\Omega(d(n))$ edges of length $\Omega\left(\frac{d(n)}{\log(n)n^{(D-1)/D}}\right)$.
	
	\textbf{Claim 2:} $\eta_{\theta}$ induces $\Omega\left(\sqrt{\frac{k(n)}{n\log(n)^2}}d(n)\right)$ edges of length $\Omega\left(\sqrt{\frac{k(n)}{n\log(n)^2}}\frac{d(n)^{1/D}}{\log(n)}\right)$ for $n \geq n_0$.
    
    First, from Lemma \ref{lem:tdensities}, there exists $H_n \subset G_n$, such that $|H_n| \geq d(n)$, and $H_n$ is $|H_n|^{c_{\max}(n)}$-dense. 
    As we have shown before, a dense subgraph induces a large number of long edges: from Lemma \ref{lem:denseedges}, embedding $H_n$ requires $\Omega(|H_n|^{c_{\max}(n)})$ edges of length $\Omega\left(\frac{|H_n|^{1/D+c_{\max}(n) -1}}{\log(n)}\right)$.
    For these bounds to make sense, we now wish to lower bound $c_{\max}(n)$.
    
    This lower bound on $c_{\max}$ can be obtained from Lemma \ref{lem:generalizedbounds}. 
    It states that there exist $n_0, \beta$, such that for all $n \geq n_0$, $k(n)\leq \beta d(n)^{2(c_{\max}(n)-1)}n\log(n)^2$. 
    Then assuming that $n \geq n_0$, we have $c_{\max}(n) \geq \frac{\log(k(n)/\beta n \log(n)^2)}{2\log(d)}+1$.
    Further note that if $k(n)d(n)^{2/D} \geq \beta n\log(n)^2$, it can then be verified that $c_{\max}(n) \geq 1/D + c_{\max}(n) - 1 \geq 0$.
    Since $|H_n| \geq d(n)$, then this implies that $|H_n|^{c_{\max}(n)} \geq d(n)^{c_{\max}(n)} \geq d(n)^{\frac{\log(k(n)/\beta n\log(n)^2)}{2\log(d)}+1} = \sqrt{\frac{k(n)}{\beta n\log(n)^2}}d(n)$, and $|H_n|^{1/D + c_{\max}(n) -1 } \geq d(n)^{1/D + c_{\max}(n) -1 } \geq d(n)^{1/D + \frac{\log(k(n)/\beta n\log(n)^2)}{2\log(d)}} = \sqrt{\frac{k(n)}{\beta n\log(n)^2}}d(n)^{1/D}$.
    
    Also note that if $k(n)d(n)^{2/D} \leq \beta n\log(n)^2$, the bounds on the number and length of the edges become trivial, but still apply. 
	
	Since embedding $G_n$ implies embedding $H_n$, the desired result follows.

    \textbf{Claim 3:} If $k(n)d(n)^{2/D} \geq \beta n\log(n)^2/(1-\alpha)$, then $\eta_{\theta}$ induces at least $\Omega\left(\sqrt{\frac{(1-\alpha)k}{\beta n\log(n)^2}}^{1/\log_n(d)} \alpha k \right)$ edges of length $\Omega\left(\sqrt{\frac{(1-\alpha)k}{\beta n\log(n)^2}}\frac{d(n)^{1/D}}{\log(n)}\right)$ for $n \geq n_0$.

    In the proof of the previous claim, we have shown the existence of a subgraph $H_n$ such that $|H_n| \geq d(n)$ and $H_n$ is $|H_n|^{c_{\max}(n)}$-dense.
    However, is $H_n$ the \emph{only} dense subgraph whose existence we can prove? For a code to be good, is it sufficient to have only one dense subgraph? We will show that this is not the case.
    
    For the sake of clarity, we will write $H_{n,1}\equiv H_n$ and $G_{n,1} \equiv G_n$.
    We will be interested in what happens to $G_{n,1}$ when we remove $H_{n,1}$.

    To that end, we define $G_{n,2} = G_{n,1} \setminus H_{n,1}$, and we write $s_{n,2}(r)$ the separation profile of $G_{n,2}$.
    Similarly, we define $c_{\max,2}(n) =  \max_{r \in [d(n),n]} \log_r(s_{n,2}(r))$.
    
    The process of recursive separation of \cite{baspin2021connectivity} aims to find a tripartition $A,B,C$ of the qubits such that $A$ and $B$ are correctable.
    When applied to $G_{n,2}$, we can then find such $A,B,C$.
    We can then subsume  $H_{n,1}$ into $C$, which yield a tripartition $A,B,C \cup H_{n,1}$ of $G_{n,1}$ such that $A$ and $B$ are correctable.
    We then have $k \leq |C| + |H_{n,1}| \leq \beta  d^{2(c_{\max,2}(n)-1)}n\log(n)^2 + |H_{n,1}|$.
    
    If $H_{n,1}$ is small, and $c_{\max,2}(n)$ is too, then $k$ is actually more restricted than Lemma \ref{lem:generalizedbounds} might suggest.
    Another way to put it is, if we want $k \propto n$, then either $H_{n,1} \propto n$ ($H_{n,1}$ induces a large number of long edges) or $c_{\max,2}(n)$ has to be close to $1$, and therefore $H_{n,2}$ is also very dense. 
    Since $H_{n,1} \cap H_{n,2} = \emptyset$, then embedding $G_{n,1}$ requires embedding both $H_{n,1}$ and $H_{n,2}$ individually.  
    Equivalently, embedding $G_{n,1}$ requires embedding  $\Omega(|H_{n,1}|^{c_{\max,1}(n)} + |H_{n,2}|^{c_{\max,2}(n)})$, instead of merely $\Omega(|H_{n,1}|^{c_{\max,1}(n)})$ edges.

    We will extend these definitions to $G_{n,i}$, $H_{n,i}$ and $c_{\max,i}(n)$.
    At the $j$-th iteration the set $\{H_{n,i}\}_{i=1}^{i=j}$ induces $\Omega(\sum_{i=1}^{i=j} |H_{n,i}|^{c_{\max,i}(n)})$ edges, and $|H_{n,i}| \geq d$. 

    Using the above reasoning, we obtain at the $(j+1)$-th iteration
    \begin{align}
        \label{eq:klowbound}
        k \leq d^{2(c_{\max,j+1}(n)-1)}n + \sum_{i=1}^{i=j} |H_{n,i}|~.
    \end{align}
    Note that this peeling process might yield a stricter lower bound on $k$ only as long as $\sum_{i=1}^{i=j} |H_{n,i}|$ is of the order of $k$.
    We therefore consider the set of $\{H_{n,i}\}_{i=1}^{i=j+1}$ as the maximally large set such that $\sum_{i=1}^{i=j} |H_{n,i}| \leq \alpha k$, where $\alpha \in (0,1)$ is some fixed constant as in the theorem statement.

    As this set is maximally large, we cannot add any more elements; we have $\sum_{i=1}^{i=j+1} |H_{n,i}| > \alpha k$. 
    The number of edges to implement is then lower bounded by $\Omega(\sum_{i=1}^{i=j+1} |H_{n,i}|^{c_{\max,i}(n)} )$. 
    Further we have from eq.~\eqref{eq:klowbound} a lower bound on $c_{\max,i}(n)$
    \begin{align}
        c_{\max,i}(n)
        &\geq \frac{\log((k - \sum_{l=1}^{l=i-1} |H_{n,l}|)/\beta n\log(n)^2)}{2\log(d)} + 1 \nonumber\\
        &\geq \frac{\log((1-\alpha)k/\beta n\log(n)^2)}{2\log(d)} + 1 \label{eq:cmaxilowerbound}~.
    \end{align}
    If $k(n)d(n)^{2/D} \geq \beta n\log(n)^2/(1-\alpha)$ then eq.~\eqref{eq:cmaxilowerbound} implies $c_{\max,i}(n) \geq 1-1/D \geq 0$.
    We therefore have
    \begin{align*}
        \sum_{i=1}^{i=j+1} |H_{n,i}|^{c_{\max,i}(n)}
        &\geq \sum_{i=1}^{i=j+1} d^{\log_d(|H_{n,i}|)\frac{\log((1-\alpha)k/\beta n\log(n)^2)}{2\log(d)}}|H_{n,i}|\\
        &= \sum_{i=1}^{i=j+1} \sqrt{\frac{(1-\alpha)k}{\beta n\log(n)^2}}^{\log_n(|H_{n,i}|)/\log_n(d)}|H_{n,i}|~.
    \end{align*}
    Since $|H_{n,i}| \leq n$, then $\log_n(|H_{n,i}|) \leq 1$, and
    \begin{align*}
        \sum_{i=1}^{i=j+1} \sqrt{\frac{(1-\alpha)k}{\beta n\log(n)^2}}^{\log_n(|H_{n,i}|)/ \log_n(d)}|H_{n,i}|
        &\geq \sqrt{\frac{(1-\alpha)k}{\beta n\log(n)^2}}^{1/\log_n(d)} \sum_{i=1}^{i=j+1} |H_{n,i}|\\
        &\geq \sqrt{\frac{(1-\alpha)k}{\beta n\log(n)^2}}^{1/\log_n(d)} \alpha k~.
    \end{align*}

    The length of the edges is at least $\Omega(\min_{i \in [1,j+1]}|H_{n,i}|^{1/D + c_{\max,i}(n) - 1}/\log(n))$. By assumption, we have $c_{\max,i}(n) \geq 1-1/D$, or $1/D + c_{\max,i}(n) - 1 \geq 0$.
    Further, since $|H_{n,i}| \geq d$, we then have $|H_{n,i}|^{1/D + c_{\max,i}(n) - 1} \geq d^{1/D + c_{\max,i}(n) - 1} \geq d^{1/D}\sqrt{\frac{(1-\alpha)k}{\beta n\log(n)^2}}$, and the desired result follows.
\end{proof}

\section{Application of Main theorem to the stacked model}
\label{sec:stacked}
In this section, we return to Question 2 presented in the introduction: is it possible to implement a quantum LDPC code in $2$ or $3$ dimensions using \emph{mostly} local stabilizers?
We show that a particular model that has been proposed earlier, called the stacked architecture, provides strong evidence that the properties of such a  code will be limited.

We begin by describing the model in more detail.
Suppose we wished to design an error correcting code using a stacked layout in $2$-dimensions.
Consider the following proposal where qubits are laid out on a square grid of size $n = 2^{l_m} \times 2^{l_m}$ as shown in fig.~\ref{fig:stacked-pre}.
In total, there are $l_m$ layers in this stack, where the generators at level $l$ act within a ball of radius $r_l = 2^{l}/\sqrt{2}$.
At the very top, we have a highly nonlocal stabilizer associated with a ball of radius $r_{l_m} = 2^{l_m}/\sqrt{2}$.
To be clear, while the stabilizer in the top-most layer has a radius of $r_{l_m}$, it still only jointly measures some constant number of qubits, and each qubit is involved in a constant number of generators.
The radius merely constrains where these qubits are allowed to be located.
In the next layer we have $4$ stabilizers but these stabilizers are each only supported within a ball of radius $r_{l-1} = 2^{l-1}/\sqrt{2}$.
This proceeds until we hit the very last layer---there are $4^{l_m-l}$ such generators in layer $l$---until we hit layer $0$ which consists of stabilizers supported entirely within a ball of constant radius.
It follows that the majority of the stabilizers are in the last layer or in other words, the majority of stabilizers are local with $r = O(1)$ locality.
A natural question then is whether the nonlocal checks are numerous enough to allow for good codes.

We present two ways of obtaining bounds on the performance of such codes.
The first bound, presented in Section \ref{subsec:direct} is a direct application of Theorem \ref{thm:main} and is the tightest bound we could find.
The second bound, presented in Section \ref{subsec:metric} uses a metric that measures the average edge length, as well as the higher-order moments, to bound the properties of the code.
Although slightly weaker than the first bound, we present it as it may be applicable as a simple tool in other contexts.

\subsection{A direct bound}
\label{subsec:direct}
A corollary of our results is that the average length of the interactions in the implementation of a code limits code properties.
For example, a family of codes with linear distance requires $\Omega(n)$ edges of length $\Omegalog(n^{1/2})$.
If this system is sparse, then the average length is $\Omegalog(n^{1/2})$.
Conversely, if the average length of the interactions is not $\Omegalog(n^{1/2})$, then the system cannot implement a family of linear-distance codes.

Extending this idea, we can use a direct edge-counting argument together with Theorem \ref{thm:main} to bound the distance, and obtain a tradeoff between $k$ and $d$.

\begin{corollary}
	The stacked model satisfies $d = n^{2/3}\log(n)^{2/3}$, and $k^3d^4 = O(n^5\log(n)^4)$.
\end{corollary}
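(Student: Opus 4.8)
The plan is to count the total length budget available in a stacked layout and confront it with the lower bounds from Theorem~\ref{thm:main}. First I would compute, for the stacked model, the quantity
\begin{align*}
    \Lambda(l) := \#\{\text{edges of length} \geq r_l\}~,
\end{align*}
i.e.\ how many interactions can reach across a ball of radius $r_l = 2^l/\sqrt 2$. Since layer $l'$ contains $4^{l_m - l'}$ stabilizers, each of which is LDPC (so contributes $O(1)$ edges), and a stabilizer in layer $l'$ can only produce edges of length $O(r_{l'})$, the only edges of length $\gtrsim r_l$ come from layers $l' \geq l$. Hence $\Lambda(l) = O\!\left(\sum_{l' \geq l} 4^{l_m - l'}\right) = O(4^{l_m - l}) = O(n / 4^{l})= O(n / r_l^2)$. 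In other words: in the stacked model, the number of edges of length $\geq \ell$ is $O(n/\ell^2)$ for every scale $\ell$. That single inequality is the structural input; everything else is plugging it against Theorem~\ref{thm:main} with $D = 2$.

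For the distance bound, Claim~1 of Theorem~\ref{thm:main} says the embedding induces $\Omega(d)$ edges of length $\Omegalog(d/n^{1/2})$, i.e.\ of length at least $c\, d/(n^{1/2}\log n)$ for some constant $c$. Setting $\ell = c\, d/(n^{1/2}\log n)$ in the stacked budget, we need
\begin{align*}
    d = O\!\left(\frac{n}{\ell^2}\right) = O\!\left(\frac{n^2 \log(n)^2}{d^2}\right) \implies d^3 = O(n^2 \log(n)^2)~,
\end{align*}
which rearranges to $d = O(n^{2/3}\log(n)^{2/3})$, the first claimed bound. For the $k$--$d$ tradeoff I would use Claim~2 (or Claim~3): it forces $\Omegalog(\sqrt{k/n}\, d)$ edges of length $\Omegalog(\sqrt{k/n}\, d^{1/2})$. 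Writing $m$ for the edge count and $\ell$ for the length, the stacked constraint $m = O(n/\ell^2)$ becomes, after substituting $m \gtrsim \sqrt{k/n}\,d/\log(n)$ and $\ell \gtrsim \sqrt{k/n}\,d^{1/2}/\log(n)$,
\begin{align*}
    \sqrt{\frac{k}{n}}\frac{d}{\log n} = O\!\left(\frac{n}{(k/n)\, d /\log(n)^2}\right) = O\!\left(\frac{n^2 \log(n)^2}{k\, d}\right)~,
\end{align*}
and clearing denominators (multiplying through by $k d$, squaring to remove the $\sqrt{k/n}$) yields $k^3 d^4 = O(n^5 \log(n)^4)$ after collecting the logarithmic factors. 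I would double-check the exact power of $\log n$ by carrying the $\Omegalog$ constants through carefully, and if Claim~2's log factors are not quite tight enough I would instead run the argument through Claim~3 with a fixed $\alpha$ (say $\alpha = 1/2$), which gives a cleaner count of $\Omega(\alpha k)$ edges at the cost of the $\sqrt{(1-\alpha)k/n}^{1/\log_n d}$ prefactor — a factor that is $n^{o(1)}$ and absorbs into the polylog.

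The main obstacle I anticipate is bookkeeping the logarithmic factors so that the final exponents match the stated $n^5 \log(n)^4$ exactly rather than something like $n^5 \,\mathrm{polylog}(n)$ — in particular deciding whether to route through Claim~2 or Claim~3 and handling the $\Omegalog$ versus $\Omega$ distinction in the edge lengths. A secondary subtlety is making the edge-counting rigorous: one must confirm that an edge realized in the embedding at length $\geq \ell$ really does force a stabilizer whose support ball has radius $\geq \ell/2$ or so, hence lives in a sufficiently high layer; this is exactly the remark preceding the Discussion (an edge of length $\ell$ implies a stabilizer of range $\geq \ell$), so it should go through cleanly. Also worth a sentence: the bound must hold for \emph{every} $\theta$-embedding, but the stacked model \emph{specifies} the embedding (qubits on the grid), so we simply apply Theorem~\ref{thm:main} to that particular $\eta_\theta$ with $\theta = 1$ and $D = 2$.
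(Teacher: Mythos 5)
Your proposal is correct and follows essentially the same route as the paper: your structural bound $\#\{\text{edges of length} \geq \ell\} = O(n/\ell^2)$ is precisely the paper's count of edges living in layers $[l_t, l_m]$ above the shortest edge of the set $M$, and the confrontation with Claims 1 and 2 of Theorem~\ref{thm:main} is identical. The only loose end is the exact power of $\log n$ in the $k$--$d$ tradeoff, which you flag yourself; the paper is equally terse on that point (``applying a similar analysis'').
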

\begin{proof}
	
	We assume that each generator acts on at most $\delta_g$ qubits, and every qubit is contained in the support of at most $\delta_q$ generators. 
    Then each generator induces at most $\delta_0 \equiv \binom{\delta_g}{2}$ edges, and the degree of the connectivity graph is upper bounded by $(\delta_g-1)\delta_q$.
    
	Consider any set $M$ of edges in the connectivity graph of the stacked model.
    Then we are interested in the smallest of these edges, $e_t$, which lives in a layer $l_{t}$.
    Necessarily, all the other edges in $M$ live in the layers $[l_t,l_m]$, so $M$ is smaller than the number of edges living in the layers $[l_t,l_m]$.
    Further, each layer $l$ contains at most $4^{l_m-l}$ generators, and therefore induces at most  $\delta_0 4^{l_m-l}$ edges.
    We therefore have
	\[
	    |M| \leq \sum_{l_t}^{l_m} \delta_0 4^{l_m-l} \implies l_t \leq l_m -\log_4(|M|)+\alpha_0~,
    \]
    for some constant $\alpha_0$ depending on $\delta_0$. 
	
    Recall from Theorem \ref{thm:main}, Claim 1, that any embedding $\eta_{\theta}$ induces a set of $\Omega(d)$ edges of length $\Omega(\frac{d}{n^{1/2}\log(n)})$ in $2$-dimensions.
	We let $M$ correspond to this set, and therefore, $|M| = \Omega(d)$, or equivalently $\log_4(|M|) = \log_4(d) + \Omega(1)$. Also, since $e_t$ is in $M$, we have $|e_t|=\Omega\left(\frac{d}{n^{1/2}\log(n)}\right)$.
	Furthermore, by choice of $e_t$, it has length at most $2^{l_t}/\sqrt{2}$.
    We then have
	\[
	    2^{l_m -(\log_4(d)+ \Omega(1))+\alpha_0} = 2^{l_m -\log_4(|M|)+\alpha_0} \geq 2^{l_t} = \Omega(|e_t|) = \Omega\left( \frac{d}{n^{1/2}\log(n)}\right)~,
	\]
	where $|e_t|$ denotes the length of the edge $e_t$.
    Equivalently $d = O(n^{2/3}\log(n)^{2/3})$.

	Applying a similar analysis to the $k$-$d$ tradeoff from Theorem \ref{thm:main}, Claim 2, we obtain $k^3d^4 = O(n^5\log(n)^4)$.
\end{proof}

The distance bound immediately implies that this limited amount of nonlocality only yields a limited amount of leeway.
A $2$-dimensional local code, with this limited nonlocality, is constrained like a $3$-dimensional local code.
We do not know if this bound can be saturated, but it does not readily forbid the implementation of constant rate codes, with $d \propto\sqrt{n}$.

\subsection{A crude measure of nonlocality}
\label{subsec:metric}
In this section, we provide an alternate way to obtain (almost) the same bound as above.
We do so by proposing a crude measure of nonlocality which measures the average edge length of the connectivity graph and higher-order moments of the lengths of the edges.
We hope that this will be more broadly applicable as a quick-and-dirty tool when studying other implementations.

We begin with a metric which measures the length of an edge as given by a particular embedding.
\begin{definition}
    Let $\cC$ be an $\dsl n,k,d \dsr$ code with connectivity graph $G = (V,E)$.
    Let $\eta_{\theta}: V \to \bbR^D$ be an embedding.
    We define the nonlocality metric $\mu : E \rightarrow \mathbb{N}$ such that $\mu$ measures the length of an edge $e$ according to the embedding $\eta$.
    Equivalently,
	\begin{align*}
	    \forall e = (u,v) \in E, \mu(e) = \norm{\eta_{\theta}(u)-\eta_{\theta}(v)}~.
    \end{align*}
\end{definition}

We then define the $p$-th order of nonlocality as the $p$-th moment of the measure $\mu$.
\begin{definition}
	The $p$-th order of nonlocality, $p \in \mathbb{R}^+$, for a system with nonlocality metric $\mu$ is defined as
	\begin{align*}
        \Delta_p = \sum_{e \in E} \mu(e)^p.
    \end{align*}
\end{definition}

From Theorem \ref{thm:main}, we know that there have to be a minimum number of edges of a certain length.
This allows us to derive lower bounds on $\Delta_p$.
\begin{corollary}
    \label{cor:momentbounds}
	For a LDPC code $\cC$ with parameters $\dsl n,k,d \dsr$, we have
	\begin{enumerate}
		\item $\Delta_p  = \Omega\left(\frac{d^{p+1}}{n^{p(D-1)/D}\log(n)^p}  \right)$; and
		\item $\Delta_p  = \Omega\left(\left(\frac{k}{n}\right)^{(1+p)/2}\frac{d^{1+p/D}}{\log(n)^p}\right)$.
	\end{enumerate}
\end{corollary}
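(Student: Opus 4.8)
The plan is to feed the edge-counting guarantees of Theorem~\ref{thm:main} directly into the definition $\Delta_p = \sum_{e\in E}\mu(e)^p$. The elementary observation driving everything is that every summand is nonnegative and, since $p>0$, the map $x\mapsto x^p$ is monotone increasing; hence whenever the embedding $\eta_{\theta}$ induces a set of $N$ edges each of length at least $\ell$, restricting the sum to those edges and bounding each term below by $\ell^p$ gives $\Delta_p \geq N\ell^p$. The two parts of the corollary then correspond to the two edge-counting statements of the main theorem.

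For the first bound I would apply Claim~1 of Theorem~\ref{thm:main}, which supplies $N = \Omega(d)$ edges of length $\ell = \Omega\!\left(\frac{d}{n^{(D-1)/D}\log(n)}\right)$. Substituting into $\Delta_p \geq N\ell^p$ gives
\begin{align*}
\Delta_p \;=\; \Omega\!\left(d\cdot\left(\frac{d}{n^{(D-1)/D}\log(n)}\right)^{\!p}\right) \;=\; \Omega\!\left(\frac{d^{p+1}}{n^{p(D-1)/D}\log(n)^p}\right)~.
\end{align*}
For the second bound I would instead invoke Claim~2 of Theorem~\ref{thm:main}, which supplies $N = \Omega\!\left(\sqrt{k/n}\,d\right)$ edges of length $\ell = \Omega\!\left(\sqrt{k/n}\,d^{1/D}\right)$, up to the polylogarithmic factors recorded in the proof of that theorem. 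The same substitution yields
\begin{align*}
\Delta_p \;=\; \Omega\!\left(\sqrt{\frac{k}{n}}\,d\cdot\left(\sqrt{\frac{k}{n}}\,d^{1/D}\right)^{\!p}\right) \;=\; \Omega\!\left(\left(\frac{k}{n}\right)^{(1+p)/2}\frac{d^{1+p/D}}{\log(n)^p}\right)~.
\end{align*}

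The derivation is mechanical and there is no genuine obstacle: all of the content is already in Theorem~\ref{thm:main}, and this corollary merely repackages it as a bound on moments of the edge-length measure. The only point that needs a little care is the bookkeeping of the $\log n$ powers hidden inside the $\Omega(\cdot)$ statements of Theorem~\ref{thm:main}, i.e.\ checking that they combine to the $\log(n)^p$ appearing here (weakening the bound by further log factors if one wants to be conservative). It is also worth noting explicitly that one should not expect to beat this crude estimate in general, since $\Delta_p$ may in fact be dominated by precisely the small collection of long edges that the main theorem already forces to exist.
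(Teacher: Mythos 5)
Your proposal is correct and follows exactly the paper's own argument: restrict the sum defining $\Delta_p$ to the $\Omega(d)$ (resp.\ $\Omega(\sqrt{k/n}\,d)$) long edges guaranteed by Claims~1 and~2 of Theorem~\ref{thm:main}, bound each term below by the $p$-th power of the guaranteed length, and multiply. The bookkeeping of the logarithmic factors also matches the paper.
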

\begin{proof}
	First note that from Claim 1 in Theorem \ref{thm:main} there exist $\beta \in \Omega\left(\frac{d}{n^{(D-1)/D}\log(n)}\right)$, and $\gamma \in \Omega(d)$ such that there are at least $\gamma$ edges whose length is greater than $\beta$.
	We have
	\begin{align*}
		\Delta_p = \sum_{e \in E} \mu(e)^p 
		& \geq \sum_{e \in E : \mu(e) \geq \beta}\mu(e)^p \\
		&  \geq \beta^p \sum_{e \in E : \mu(e) \geq \beta} 1 \\
		& \geq \beta^p \gamma \\
		&=  \Omega\left(\frac{d^{p+1}}{n^{p(D-1)/D}\log(n)^p}  \right)
	\end{align*}

    The second claim follows similarly from Theorem \ref{thm:main}, claim 2.
\end{proof}

The next result focuses on the second moment.
It allows us to show the following distance and rate-distance tradeoffs.
\begin{corollary}
    \label{cor:stacktradeoff}
    Let $\cC$ be any $\dsl n,k,d \dsr$ code that is implement via the stacked architecture as described above.
    The embedding map $\eta_{\theta}$ is therefore implicit, where $\theta = 1$.
    The code $\cC$ satisfies $d = O(n^{2/3}\log(n))$, and $k^3 d^4 = O(n^{5}\log(n)^6)$.
\end{corollary}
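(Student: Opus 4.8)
The plan is to combine the second-moment lower bounds of Corollary~\ref{cor:momentbounds} (taken at $p=2$, $D=2$) with a direct upper bound on $\Delta_2$ read off from the layered structure of the stacked model. First I would bound $\Delta_2$ from above. Recall that with $n = 2^{l_m}\times 2^{l_m}$, layer $l$ contains at most $4^{l_m - l}$ generators, each supported in a ball of radius $r_l = 2^l/\sqrt 2$, and since the code is LDPC each generator induces at most $\delta_0 = \binom{\delta_g}{2}$ edges, each of length at most $2r_l = 2^{l+1}/\sqrt 2$. Hence layer $l$ contributes at most $\delta_0\, 4^{l_m - l}\cdot(2^{l+1}/\sqrt 2)^2 = O(\delta_0\, 4^{l_m}) = O(n)$ to $\Delta_2 = \sum_{e\in E}\mu(e)^2$, independently of $l$: the exponential decay in the number of generators exactly cancels the exponential growth of the squared radius. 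Summing over the $l_m + 1 = O(\log n)$ layers gives $\Delta_2 = O(n\log n)$.

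Next I would feed this into the two bounds of Corollary~\ref{cor:momentbounds}. The first, at $p=2$ and $D=2$, reads $\Delta_2 = \Omega\!\left(d^3/(n\log(n)^2)\right)$, so $d^3/(n\log(n)^2) = O(n\log n)$, which rearranges to $d^3 = O(n^2\log(n)^3)$, i.e.\ $d = O(n^{2/3}\log n)$. The second, again at $p=2$ and $D=2$, reads $\Delta_2 = \Omega\!\left((k/n)^{3/2} d^2/\log(n)^2\right)$, so $(k/n)^{3/2} d^2 = O(n\log(n)^3)$; squaring both sides and clearing the denominator yields $k^3 d^4 = O(n^5\log(n)^6)$, matching the statement of Corollary~\ref{cor:stacktradeoff}.

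The only delicate point — the ``main obstacle,'' such as it is — is the bookkeeping in the $\Delta_2$ upper bound: one must verify that the per-layer contributions are genuinely uniformly $O(n)$, so that summing over layers costs only a single factor of $\log n$ and not more. Once that cancellation is checked, everything else is elementary algebra. Note that this derivation is slightly weaker, by logarithmic factors, than the bound obtained in Section~\ref{subsec:direct} through the sharper layer-wise edge-counting argument; the trade-off is that here we only manipulate a single scalar moment $\Delta_2$, which is why this route is worth recording as a lightweight tool.
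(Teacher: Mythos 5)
Your proposal is correct and follows essentially the same route as the paper: bound $\Delta_2$ layer by layer using the $4^{l_m-l}$ generators of radius $r_l=2^l/\sqrt{2}$, observe the cancellation at $p=2$ giving $\Delta_2=O(n\log n)$, and then invoke the two bounds of Corollary~\ref{cor:momentbounds} at $p=2$, $D=2$. The algebra matches the paper's conclusion exactly, so there is nothing to add.
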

\begin{proof}
	There are $4^{l_m-l}$ generators at level $l$.
    In the connectivity graph, each generator induces at most $\binom{\delta_g}{2} \equiv \delta_0$ interactions of length $2 r_l = 2 \cdot 2^l / \sqrt{2} = \sqrt{2}\cdot 2^l$.
	Then we have
    \begin{align*}
        \Delta_p \leq \sum_{l=0}^{l_m} \delta_0 4^{l_m-l} (2r_l)^p =  2^{p/2}\delta_0 \sum_{l=0}^{l_m}  4^{l_m-l}\cdot 2^{lp} = 2^{p/2}\delta_0 4^{l_m} \sum_{l=0}^{l_m} 2^{(p-2)l}~.
    \end{align*}
	
	This expression has then different closed forms, depending on the value of $p$.
    The optimum bound occurs at $p = 2$, then $\Delta_p = O(n\log(n))$.
    Using Corollary \ref{cor:momentbounds}, we have 
    \begin{align*}
        d = O(n^{2/3}\log(n)) \qquad k^3 d^4 = O(n^{5}\log(n)^6)~.
    \end{align*}
 	This concludes the proof.
 \end{proof}
This latter bound is weaker than that presented in Section \ref{subsec:direct} by polylogarithmic factors.
However, it is somewhat simpler in that it did not rely on the ordering of edge lengths.

\textbf{Discussion:}
\begin{enumerate}
\item We find that most known quantum LDPC codes do not violate these bounds.
Hypergraph product codes (with $k = \Theta(n)$ and $d = \Theta(\sqrt{n})$) \cite{tillich2014quantum}, codes based on high-dimensional expanders (with $k = \widetilde{\Theta}(\sqrt{n})$ and $d = \widetilde{\Theta}(\sqrt{n})$) \cite{evra2020decodable,kaufman2020quantum}, 2D hyperbolic codes (with $k = \Theta(n)$ and $d = O(\log(n))$)\cite{freedman2002z2,breuckmann2016constructions}, 4D hyperbolic codes (with $k= \Theta(n)$ and $d= O(n^{\epsilon})$ for $\epsilon < 0.3$) \cite{londe2017golden,hastings2013decoding,guth2014quantum}, fiber bundle codes (with $k = \widetilde{\Theta}(n^{3/5})$ and $d = \widetilde{\Theta}(n^{3/5})$)\cite{hastings2020fiber} or balanced-product codes (with $k = \widetilde{\Theta}(n^{4/5})$ and $d = \widetilde{\Theta}(n^{3/5}$))\cite{breuckmann2020balanced} do not violate either the distance bound or the $k$-$d$ tradeoff.
Indeed all of these codes have a distance that scales as $o(n^{2/3})$.
Of these codes, the only constant-rate codes are hypergraph product codes and hyperbolic codes.

It is still not clear whether these codes \emph{can} be implemented via a stacked architecture, but our techniques do not rule out this possibility.
It would be interesting to find an explicit layout of a hypergraph product code, the best constant-rate codes, in two dimensions which can be implemented using such a model.
\item On the other hand, the Panteleev-Kalachev codes \cite{panteleev2020quantum} achieve distance $\Theta(n/\log(n))$; these codes clearly violate the distance bound.
In general, their codes achieve $d = \Theta(n^{1-\alpha/2}/\log(n))$ and $k = \Theta(n^{\alpha}\log(n))$.
These are not ruled out by the above bounds when $\alpha \in (2/3,1)$.
We also note that although codes of distance $d = \Theta(n)$ are not known to exist, these cannot be implemented using a stacked architecture.
\end{enumerate}

\section{Conclusions}
We considered the question of how much nonlocality is needed to implement quantum LDPC codes.
In our results, this question is addressed by lower bounding the number of long-range connections between qubits, and their length.
In particular, in 2D we show that a quantum LDPC code with distance $d \propto n^{1/2 + \epsilon}$ requires $\Omega(n^{1/2 + \epsilon})$ interactions of length $\Omegalog(n^{\epsilon})$.
We also focus on constant-rate quantum LDPC codes, as the cost of encoding a logical qubit in such a code remains fixed.
For such a code to exhibit a distance $d \propto n^\alpha$, we find that one requires $\Omegalog(n)$ interactions of length $\Omegalog(n^{\alpha/2})$.
We then considered a stacked architecture, a model considered to implement quantum LDPC codes.
In this model, although most stabilizers are local, a few are capable of longe-range connections.
We showed that the distance of this architecture is bounded.
Furthermore, it too witnesses a sharp tradeoff between $k$ and $d$.
We hope these tools can be used to understand the difficulty of implementing efficient codes, as well as the limitations of particular architectures.

\emph{Acknowledgements}---
We would like to thank Guillaume Duclos-Cianci for facilitating this collaboration.
We thank Anthony Leverrier for his comments on a draft of this manuscript.
AK is supported by the Bloch postdoctoral fellowship at Stanford University and NSF grant CCF-1844628.
AK thanks Emily Davis, Dripto Debroy and Sam Roberts for pointing him to references on implementations of long-range connectivity.

\bibliographystyle{unsrtabbrev}
\bibliography{references}

\end{document}